\documentclass[11pt]{article}
\usepackage{a4,amsmath,amsfonts,amssymb,epsfig}
\usepackage{color,epsf}
\usepackage{latexsym}
\usepackage{graphicx}
\usepackage{amsthm}
\usepackage{url}
\usepackage{algorithm}
\usepackage{algpseudocode}

\usepackage{stix}

\usepackage{subfig}
\usepackage{alltt}
\usepackage{orcidlink}

\newcommand{\e}{\ensuremath{\mathrm{e}}}
\newcommand{\ad}{\ensuremath{\mathrm{ad}}}

\newtheorem{theorem}{Theorem}[section]
\newtheorem{lemma}[theorem]{Lemma}

\newtheorem{corollary}[theorem]{Corollary}

\begin{document}

\title{Error bounds for the truncated Baker--Campbell--Hausdorff and Zassenhaus formulas in unitary problems}

\author{A. Arnal$^{\orcidlink{0000-0002-3283-3379}}$, F. Casas$^{\orcidlink{0000-0002-6445-279X}}$, 
J.L. Ruiz-Benito$^{\orcidlink{0009-0002-8208-297X}}$ 
  \\[2ex]
 {\small\it Departament de Matem\`atiques and IMAC, Universitat Jaume I}, \\{\small\it 12071-Castell\'on, Spain}
}

\maketitle

\begin{abstract}
The Baker--Campbell--Hausdorff (BCH) formula plays a critical role in many branches of mathematics and physics. It expresses the logarithm of the product of exponentials of non-commuting operators as an infinite series of nested commutators of the operators involved. The Zassenhaus formula is the dual of the BCH formula: the exponential of a sum of operators is written as an infinite product of exponentials involving the operators and their commutators. 
In practical computations, however, one typically has to truncate the expansions, and so understanding the error committed by the resulting approximations
and eventually providing suitable bounds for this error is of paramount interest.  In this work we present a general strategy to derive rigorous error bounds and explicit error constants
for the BCH and Zassenhaus formulas when the operators involved are skew-adjoint, as is the case for quantum evolution problems.

\end{abstract}

\bigskip

\textbf{Keywords:} Baker--Campbell--Hausdorff theorem, Zassenhaus formula, \\
Error 
bounds, Unitary problems

\section{Introduction}
\label{sec.1}

\paragraph{Baker--Campbell--Hausdorff theorem.}
The Baker--Campbell--Hausdorff (BCH) formula is one of the most studied objects in the theory of Lie algebras and Lie groups, with profound impact 
 in other areas
of mathematics and physics, ranging from control theory to numerical analysis of differential equations and quantum mechanics  \cite{dragt16lmf,gorbatsevich97fol,bonfiglioli12tin}. 

Given two arbitrary non-commutative operators $A$ and $B$, the BCH theorem states that 
\begin{equation} \label{bch.0}
   \e^A \, \e^{B} = \e^{\Phi(A,B)},
\end{equation}   
where
\begin{equation} \label{bch.1}
  \Phi(A,B) = \log( \e^A \, \e^B) =   A + B + \sum_{n\ge 2} \Phi_n(A,B),
\end{equation}
and every
 $\Phi_n(A,B)$ is a \emph{homogeneous Lie polynomial}, that is, 
a linear combination
of commutators of the form $[V_1,[V_2, \ldots,[V_{n-1},V_n]
\ldots]]$ with $V_i \in \{A,B\}$ for $1 \le i \le n$, the
coefficients being rational constants. As
usual, the commutator $[A,B]$ is defined as $AB - BA$. 

The fact that $\Phi$ admits a representation as a series composed exclusively of commutators (or, more generally, Lie brackets) of $A$ and $B$ entails significant theoretical implications, particularly within the framework of Lie group theory. This result provides an explicit expression for the group multiplication in terms of the Lie bracket operation of the corresponding Lie algebra, and also prove the existence of a local Lie group associated with a given Lie algebra
\cite{gorbatsevich97fol}.

\paragraph{Explicit form (Dynkin).} An explicit expression for $\Phi_n$ in the BCH series \eqref{bch.1} was obtained
 by Dynkin \cite{dynkin47eot,dynkin00spo} in the form
\begin{equation}  \label{eq.1.2}
  \Phi_n(A,B) =  \sum_{p_i, q_i}
    \frac{(-1)^{n-1}}{n} \frac{[A^{p_1} B^{q_1} \ldots
      A^{p_n} B^{q_n} ]}{(\sum_{i=1}^n (p_i + q_i)) \,
      p_1! \, q_1! \ldots p_n! \, q_n!},
\end{equation} 
where the summation is taken over all non-negative
integers $p_1, q_1, \ldots$, $p_n$, $q_n$ such that
$p_i+q_i>0$, $i=1,\ldots,n$, and 
$[ A^{p_1} B^{q_1} \ldots A^{p_n} B^{q_n}]$ denotes the right-nested commutator based on the word
$ A^{p_1} B^{q_1} \ldots A^{p_n} B^{q_n}$. Thus, for instance,
\[
   [AB^2A^2 B] \equiv [A B B A A B] \equiv [A,[B,[B,[A,[A,B]]]]].
\]
Although \eqref{eq.1.2} can be used to derive the first terms, its intrinsic complexity makes the task increasingly difficult with $n$, and thus many other
procedures to construct the series have been proposed along the years (see \cite{bonfiglioli12tin,casas09aea} and references therein). 
All of them have a basic limitation, however: not all the commutators are independent, and so a rewriting process has to be carried out to express the results in terms of a basis of the free Lie algebra $\mathcal{L}(A,B)$
generated by $A$ and $B$. This process, although can be done by computer, requires nevertheless an
increasing amount of computational time and memory resources. One of the most efficient
algorithms of this type was proposed in \cite{casas09aea}, and explicit expressions of $\Phi_{n}(A,B)$ up to $n=20$ were obtained 
in terms of the classical Hall and Lyndon basis of $\mathcal{L}(A,B)$. 

For the purpose of illustration, the first terms in the classical Hall basis read explicitly
\begin{equation} \label{explic_expre}
\aligned &  \Phi_2(A,B)  =  \frac{1}{2} [A,B], \qquad\qquad  \Phi_3(A,B)  =  \frac{1}{12} \big([A,[A,B]] -   [B,[A,B]] \big), \\
& \Phi_4(A,B)  =  -\frac{1}{24}  [B, [A,[A,B]]], \\
  & \Phi_5(A,B) =   -\frac{1}{720} [A,[A,[A,[A,B]]]] - \frac{1}{180}  [B,[A,[A,[A,B]]]]  \\
  & \quad + \frac{1}{180}  [B,[B,[A,[A,B]]]]  +  \frac{1}{720}  [B,[B,[B,[A,B]]]] \\
  & \quad - \frac{1}{120}  [[A,B],[A,[A,B]]]  - \frac{1}{360}  [[A,B],[B,[A,B]]]
\endaligned
\end{equation}

\paragraph{Convergence.} While expression \eqref{eq.1.2} holds globally within a free Lie algebra, if $A$ and $B$ are elements of a normed algebra, the corresponding series of normed elements does not necessarily converge outside a neighborhood of zero. Consequently, this series cannot, in general, be employed to compute $\Phi(A,B)$ beyond the local regime. In such circumstances, a precise characterization of the series' convergence domain is of particular 
interest. Again, this subject has received considerable attention in the literature, and many different characterizations have been obtained \cite{bonfiglioli12tin}.

If $A$ and $B$ are elements of a complete normed Lie algebra $\mathfrak{g}$ with a submultiplicative norm $\| \cdot \|$ such that
\[
  \| [A,B]\| \le 2 \|A\| \, \|B\|,
\]  
then the series $\Phi = A + B + \sum_{n\ge 1} \Phi_n$ is absolutely convergent for all $A$, $B$ in the domain $D_1 \cup D_2 \in \mathfrak{g} \times \mathfrak{g}$,
with 
\begin{equation} \label{def_D1}
  D_1 = \left\{ (A,B): \, 2 \, \|A\| \le \int_{2 \|B\|}^{2 \pi} \, \frac{1}{2 + \frac{t}{2} \left( 1 - \cot \left(\frac{t}{2} \right) \right)} dt \right\},
\end{equation}
whereas $D_2$ corresponds to interchanging $A$ by $B$ in \eqref{def_D1}. Moreover, the series diverges, in general, if $\|A\| + \|B\| \ge \pi$ \cite{blanes04otc}.

\paragraph{Truncation of the BCH series.} Even when the convergence  
is guaranteed one cannot hope in general to compute all the terms of the series, 
and so for practical implementations one has to truncate the series. It makes sense then to get some estimates of the error committed by the resulting 
approximation. To be more specific, if we denote by $Z^{[m]}$ the sum of the first $m$ terms in the series of $\Phi$, 
\begin{equation} \label{bch.2}
   Z^{[m]}(A,B) =  \sum_{n=1}^m \Phi_n(A,B),  \qquad \mbox{ with } \quad \Phi_1(A,B) \equiv A+B, 
\end{equation}
then the question is how to obtain appropriate bounds on $\|\e^{A} \, \e^{B} - \e^{Z^{[m]}(A,B)} \|$, or equivalently (within the domain of convergence), on $\|\e^{\Phi(A,B)} - \e^{Z^{[m]}(A,B)} \|$.

These error bounds become 
crucial for quantifying the deviation from the true value. An example in point is the theoretical analysis and practical development of quantum simulation algorithms, especially those related with product formulas, since these errors directly impact gate complexity and depth optimization for this type of algorithms \cite{childs21tot}. 
Moreover, having bounds depending on nested commutators could be advantageous when the operators involved nearly commute.

Surprisingly, this issue has not been addressed very often in the literature, beyond the most trivial case 
$\|\e^{A} \, \e^{B} - \e^{A+B} \|$, which of course corresponds to the
well known Lie--Trotter approximation of the exponential $\e^{A+B}$ \cite{suzuki85dfo}.  Thus, the only general result we are aware of appears in 
\cite[Theorem 5.31]{bonfiglioli12tin}, and it is formulated in terms of the norms $\|A\|$, $\|B\|$ within the domain of convergence of the series.

\paragraph{Zassenhaus formula.} One can also consider the dual of the BCH formula, namely, to find operators $C_1, C_2, \ldots$ such that
\begin{equation} \label{zass_1}
  \e^{A + B} = \e^{A} \, \e^{B} \, \prod_{n \ge 2} \e^{ C_n(A,B)} = \e^{A} \, \e^{B} \, \e^{C_2(A,B)} \cdots \e^{ C_k(A,B)} \cdots
\end{equation}
and thus `disentangle' the exponential $\e^{A + B}$. The Zassenhaus formula addresses this problem and establishes that the operator
$C_k(A,B)$ in \eqref{zass_1} is a homogeneous Lie polynomial
in $A$ and $B$ of degree $k$ \cite{magnus54ote}. In particular,
\begin{equation} \label{first_terms_z}
\begin{aligned}
 & C_2(A,B) = -\frac{1}{2} [A, B] \\
 & C_3(A,B) = \frac{1}{6} [A,[A,B]]  + \frac{1}{3} [B,[A,B]]  \\
 & C_4(A,B) = -\frac{1}{24} [A,[A,[A,B]]] - \frac{1}{8} [B,[A,[A,B]]] - \frac{1}{8} [B,[B,[A,B]]].
\end{aligned}
\end{equation}
Although there are several procedures to get the terms in the expansion \eqref{zass_1} \cite{baues81cca,suzuki77otc,weyrauch09ctb}, 
the recursive algorithm proposed in \cite{casas12eco} 
is particularly simple and allows
one to determine $C_k$ up to a prescribed degree $k$ directly in terms of the minimum number of independent commutators involving $k$ operators $A$ and $B$. 

The convergence of \eqref{zass_1} has also been established when $A$ and $B$ are elements of a Banach algebra, in the sense that
\[
 \lim_{n \rightarrow \infty} = \e^A \, \e^B \, \e^{C_2} \, \cdots \, \e^{C_n} = \e^{A+B}
\]
in a certain subset of the plane $(\|A\|, \|B\|)$ \cite{suzuki77otc,bayen79otc}. 
In particular, the convergence domain contains the region $\|A\| + \|B\| < 1.054$, and extends to the points
$(\|A\|,0)$ and $(0, \|B\|)$ for arbitrarily large values of $\|A\|$ or $\|B\|$, as shown in \cite{casas12eco}. 

The Zassenhaus formula has been widely used in many different fields: periodically driven quantum systems \cite{goldman14pdq}, 
quantum nonlinear optics \cite{quesada14eot}, quantum control \cite{weidner25rqc}, neutrino oscillations \cite{chattopadhyay22npw}, 
and quantum computing \cite{zuk24rgw}, to quote just some of them. Again, for practical computations, the infinite product in \eqref{zass_1} has to be
truncated, and then the problem is to determine appropriate theoretical error bounds for the resulting approximations. To the best of our knowledge, no previous results on this issue have been published so far.

\paragraph{Main contribution.} In this work we present a general strategy that allows us to obtain rigorous bounds on the error arising when
both the BCH series and the Zassenhaus formula are truncated after an arbitrary number of terms $m \ge 1$. 
For simplicity---and due to its relevance in the simulation of quantum systems---we analyze the case where $A$ and $B$ are self-adjoint in the finite-dimensional setting. In this situation, $\mathrm{e}^{tA}$ and $\mathrm{e}^{tB}$ are unitary for $t \in \mathbb{R}$, and therefore each has norm equal to one.
Our analysis leads then to the following simple bound when the BCH series is
truncated after two terms,
\begin{equation} \label{nt3}
   \big\|\e^{A} \, \e^{B} -  \e^{A + B + \frac{1}{2}[A, B]} \big\| \le \frac{1}{4} \|[A,[A,B]]\| + \frac{1}{12} \|[B,[A,B]]\|,
\end{equation}
and the optimal estimate
\begin{equation} \label{zass_b2}
   \big\|\e^{A + B} -  \e^{A} \, \e^{B}  \,  \e^{-\frac{1}{2}[A, B]} \big\| \le \frac{1}{6} \|[A,[A,B]]\| + \frac{1}{3} \|[B,[A,B]]\|
\end{equation}
for the truncated Zassenhaus formula.
These results, as far as we know, have not been previously reported in the literature. We also generalize the treatment to the BCH series with any number of operators and also consider the symmetric BCH formula. The approach we follow leads in all cases to bounds that
are linear combinations of nested commutators of the operators involved, thus exhibiting the so-called commutator scaling property \cite{childs21tot}.

\section{Error bounds for the BCH formula with two operators}
\label{sec.2}

\subsection{Two preliminary lemmas}

In our analysis we will repeatedly use the following two results. The first one relates bounds on the difference of two time-dependent linear
unitary operators with the coefficient operators of the differential equations they satisfy.

\begin{lemma} \label{lemma_estimate}
Let $F_1(t)$, $F_2(t)$ be two unitary operators verifying the initial value problems
\begin{equation} \label{ivp_2op}
   \frac{d F_i}{dt} = M_i(t) F_i, \qquad F_i(0) = I, \qquad\qquad i=1,2,
\end{equation}
with $M_i(t)$ skew-adjoint operators. Then,
\[
\|F_1(t) - F_2(t)\| \leq 
\int_0^t \| M_1(s) - M_2(s)\| \, ds.
\]
\end{lemma}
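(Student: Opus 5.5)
We reduce the estimate to a variation-of-constants identity for the defect $F_2^{-1}F_1$ and then use unitarity to drop all propagator factors. Since both $F_1$ and $F_2$ are unitary, $\|F_1(t)-F_2(t)\| = \|F_2(t)^{-1}F_1(t) - I\|$. It therefore suffices to study $G(t) := F_2(t)^{*}F_1(t)$, which satisfies $G(0)=I$.

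The first step is to differentiate $G$. Because $M_2$ is skew-adjoint and $F_2' = M_2F_2$, we have $(F_2^*)' = F_2^* M_2^* = -F_2^* M_2$. Hence
\[
G'(t) = -F_2^*(t) M_2(t) F_1(t) + F_2^*(t) M_1(t) F_1(t) = F_2^*(t)\big(M_1(t)-M_2(t)\big)F_1(t).
\]
Integrating from $0$ to $t$ gives
\[
G(t) - I = \int_0^t F_2^*(s)\big(M_1(s)-M_2(s)\big)F_1(s)\,ds .
\]

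The second step is to take norms. Since $\|F_2^*(s)\| = \|F_1(s)\| = 1$ and the norm is submultiplicative,
\[
\|F_1(t)-F_2(t)\| = \|G(t)-I\| \le \int_0^t \|M_1(s)-M_2(s)\|\,ds .
\]
An equivalent route is to write $F_1(t)-F_2(t) = \int_0^t F_2(t)F_2(s)^{-1}(M_1(s)-M_2(s))F_1(s)\,ds$, which is the variation-of-constants formula for $E=F_1-F_2$. This $E$ satisfies $E' = M_2E + (M_1-M_2)F_1$ with $E(0)=0$.

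There is no real obstacle here. The only points to check are the following:
\begin{itemize}
\item the derivative of the adjoint uses skew-adjointness, so $F_2^*$ is indeed the inverse of $F_2$ and satisfies the stated equation;
\item the norm is the operator norm induced by the Hilbert space norm, so that unitary operators have norm one;
\item $M_i$ are regular enough, say continuous, for the fundamental theorem of calculus to apply. This is automatic in the finite-dimensional setting assumed in the paper.
\end{itemize}
For $t<0$ the same argument gives the bound with $\left|\int_0^t\cdot\right|$.
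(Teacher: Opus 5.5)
Your proof is correct and is essentially the paper's argument. Both express the difference through a variation-of-constants (Duhamel) formula and then discard the unitary factors; the paper uses the propagator of $M_1$ with inhomogeneity $(M_1-M_2)F_2$, whereas your $G=F_2^{*}F_1$ amounts to using the propagator of $M_2$ with inhomogeneity $(M_1-M_2)F_1$, a purely cosmetic difference.
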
  

\begin{proof}
The difference operator $W(t)=F_1(t) - F_2(t)$ verifies
\[
  \frac{d }{dt} W(t) = M_1(t) W(t) + S(t) \qquad W(0) = 0,
\]
with $S(t) = (M_1(t) - M_2(t)) F_2(t) $. Integrating this equation leads to
\[
  W(t) = \Psi(t) \int_0^t \Psi^{-1}(s) S(s) ds,
\]
where $\Psi(t)$ is the evolution operator of the system $\frac{d }{dt} W(t) = M_1(t) W(t)$. Since both $F_2$(t) and $\Psi(t)$ are unitary, then 
$\|F_2(t)\| = 1$ and $\| \Psi(t)\| = 1$, so that
\[
  \|W(t)\| \le \int_0^t \|S(s)\| ds \le \int_0^t \| M_1(s) - M_2(s)\| \, ds.
\]  
\end{proof}  
The second result is related with a well known result in Lie theory. If we denote $\ad_X Y = [X,Y] = X Y  - Y X$, then
\begin{equation} \label{exp_ad}
  \e^{s \, \ad_{X}} Y = \e^{s \,  X} \, Y \, \e^{-s \,  X}  = \sum_{k \ge 0} \frac{s^k }{k!} \ad_X^k Y = \sum_{k \ge 0} \frac{s^k }{k!}
 \underbrace{[X, [X, \ldots, [X}_{k}, Y]]].
\end{equation}
The following 
lemma  is then a simple consequence of the Taylor expansion of the function $f(s) = \e^{s X} \, Y \, \e^{-s X}$
with remainder in integral form.
\begin{lemma} \label{l:lemma2}
    Let $X, Y$ be two linear operators such that $\e^{s X}$ is well defined for all $s \in \mathbb R$. Then, for any $n\geq 1$, 
        \[
        \e^{s \, \ad_X} Y = \sum_{k=0}^{n-1} \frac{s^k}{k!} \, \ad_X^{k} Y + s^n G_n\left(s, X, Y\right),
    \]
    where
  \begin{equation}
  \label{eq:Gn}
G_n(s, X,Y) = \int_0^1 \frac{\sigma^{n-1}}{(n-1)!} \, \e^{s \, (1-\sigma)\, \ad_X}\, \ad_X^n Y\, d\sigma.
\end{equation}
If in addition $X$ is skew-adjoint, then
\begin{equation} \label{norm_G}
  \|G_n(s,X,Y)\| \le \int_0^1 \frac{\sigma^{n-1}}{(n-1)!} \|\ad_X^n Y\| d\sigma \le \frac{1}{n!} \|\ad_X^n Y\|.
\end{equation}

\end{lemma}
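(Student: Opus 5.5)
We prove Lemma~\ref{l:lemma2} by applying Taylor's theorem with integral remainder to the operator-valued function
\[
f(s) = \e^{s X}\, Y\, \e^{-sX} = \e^{s\,\ad_X} Y,
\]
expanded around $0$ and evaluated at $s$. Since $\e^{sX}$ is well defined (and, in finite dimensions, smooth in $s$), $f$ is infinitely differentiable. Differentiating,
\[
f'(s) = X\e^{sX}Y\e^{-sX} - \e^{sX}Y\e^{-sX}X = \ad_X f(s).
\]
By induction, and since $X$ commutes with $\e^{sX}$,
\[
f^{(k)}(s) = \ad_X^k f(s) = \e^{sX}\,(\ad_X^k Y)\,\e^{-sX} = \e^{s\,\ad_X}\ad_X^kY.
\]
Hence $f^{(k)}(0) = \ad_X^k Y$.

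Taylor's formula with integral remainder gives
\[
f(s) = \sum_{k=0}^{n-1}\frac{s^k}{k!}f^{(k)}(0) + \int_0^s \frac{(s-\tau)^{n-1}}{(n-1)!} f^{(n)}(\tau)\,d\tau .
\]
In the remainder we substitute $\tau = s(1-\sigma)$, so that $s-\tau = s\sigma$ and $d\tau = -s\,d\sigma$, with the limits $\tau=0,s$ becoming $\sigma=1,0$. The remainder becomes
\[
s^n\int_0^1 \frac{\sigma^{n-1}}{(n-1)!}\,\e^{s(1-\sigma)\ad_X}\ad_X^nY\,d\sigma = s^n G_n(s,X,Y),
\]
which is exactly \eqref{eq:Gn}. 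This step holds for $s$ of either sign, since the substitution is purely algebraic.

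For the norm bound \eqref{norm_G}, assume $X$ is skew-adjoint. Then $\e^{s(1-\sigma)X}$ is unitary for every real $s(1-\sigma)$, so for any operator $Z$,
\[
\|\e^{s(1-\sigma)\ad_X}Z\| = \|\e^{s(1-\sigma)X}Z\e^{-s(1-\sigma)X}\| = \|Z\|,
\]
using unitary invariance of the operator norm. Moving the norm inside the integral (the triangle inequality for Bochner/Riemann integrals) gives the first inequality in \eqref{norm_G}. Evaluating $\int_0^1 \sigma^{n-1}/(n-1)!\,d\sigma = 1/n!$ gives the second.

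I expect no real obstacle. The only points needing care are justifying $f^{(k)} = \ad_X^k f$, which uses that $X$ commutes with its own exponential, and tracking the sign and orientation in the change of variables so that the factor $\e^{s(1-\sigma)\ad_X}$ comes out exactly as in \eqref{eq:Gn}.
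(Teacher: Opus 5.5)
Your proposal is correct and follows the paper's route. The paper only states that the lemma follows from Taylor's formula with integral remainder applied to $f(s)=\e^{sX}Y\e^{-sX}$. You use that same expansion, substitute $\tau=s(1-\sigma)$, and bound the norm using unitary invariance for skew-adjoint $X$, so you are filling in the details the paper leaves implicit.
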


\subsection{An alternative way to construct the BCH series}

It is illustrative for our ulterior error analysis to show how the successive terms of the BCH series can be obtained by considering the
differential equations satisfied by the operators involved, as in Lemma \ref{lemma_estimate}. The problem can be formulated as follows: given two
operators $A$, $B$ and a parameter $t  > 0$, we introduce 
\[
\begin{aligned}
   & F_1(t) = \e^{t A} \, \e^{t B}, \\
   &  F_2(t) = \e^{\Phi(t A, t B)}, \qquad \mbox{ with } \qquad \Phi(tA,tB) = \sum_{n \ge 1} t^n \, \Phi_n(A,B),
\end{aligned}   
\]
so that our goal is to determine the terms of the series $\Phi(tA,tB)$ such that
\begin{equation} \label{or_prob}
  \e^{t A} \, \e^{t B} = \e^{\Phi(t A, t B)}.
\end{equation}  
It is clear that $F_1(t)$ and $F_2(t)$ satisfy \eqref{ivp_2op} with
\begin{equation} \label{def_emes}
  M_1(t) = \frac{dF_1(t)}{dt} F_1^{-1}(t), \qquad \mbox{ and } \qquad   M_2(t) = \frac{dF_2(t)}{dt} F_2^{-1}(t). 
\end{equation}  
A simple calculation shows that
\[
   M_1(t) = A + \e^{t \, \ad_{A}} B, 
\]
whereas to get $M_2(t)$ we have to use the expression for the
derivative of the exponential of a time-dependent operator  \cite{blanes09tme}:
\[    
   M_2(t) = \int_0^1 \e^{x \, \ad_{\Phi}} \big(\dot{\Phi}\big) dx = \sum_{k \ge 0} \frac{1}{(k+1)!} \, \ad_{\Phi}^k \dot{\Phi}.
\]
Here $\dot{\Phi}$ denotes the derivative of $\Phi(tA, tB)$ with respect to $t$, i.e., $\dot{\Phi} = \sum_{n \ge 1} n \, t^{n-1} \Phi_n$. Now the original problem
\eqref{or_prob} can be solved by forcing that $M_1(t) = M_2(t)$. The successive terms of the series $\Phi_n$ can then be determined by expanding both
operators as power series of $t$ and equating terms at each order. Specifically, 
\[
\begin{aligned}
   M_1(t) & = A + B + \sum_{k \ge 1} \frac{t^k}{k!} \ad_A^k B = A + B + t \, \ad_A B  + \frac{t^2}{2} \ad_A^2 B + \frac{t^3}{6} \ad_A^3 B + \cdots \\
   M_2(t) & = \Phi_1 + \sum_{k\ge1} (k+1) \, t^k \, \Phi_{k+1} + \sum_{k \ge 1} \frac{1}{(k+1)!} \, \ad_{\Phi}^k \dot{\Phi} = \Phi_1 + 2 \, t \, \Phi_2  \\
   &  \quad  + t^2 \left( 3 \Phi_3 + \frac{1}{2} [\Phi_1, \Phi_2] \right) + t^3 \left( 4 \Phi_4 + [\Phi_1, \Phi_3] + 
           \frac{1}{6} [\Phi_1, [\Phi_1, \Phi_2]] \right) + \cdots
\end{aligned}
\]  
In general, at order $t^{n-1}$,  we have
\begin{equation} \label{cons_terms}
  \frac{1}{(n-1)} \ad_A^{n-1} B = n \, \Phi_n + \mathcal{F}_{n-1}(\Phi_1, \ldots, \Phi_{n-1}),
\end{equation}
with $\mathcal{F}_{n-1}$ a function involving only commutators, 
whence it is possible to get $\Phi_n$ for $n \ge 2$  recursively in terms of commutators, starting with $\Phi_1 = A + B$. 

Suppose now we stop this procedure at a given index $m$, so that we end up with 
\begin{equation} \label{def_Z}   
   Z^{[m]}(tA, tB) \equiv \sum_{j=1}^m t^j \, \Phi_j(A,B),
\end{equation}
where the expression of $\Phi_j$, $j=1, \ldots, m$ has been obtained from \eqref{cons_terms}. Then, the previous analysis clearly shows that
$M_1(t) - M_2(t) = \mathcal{O}(t^m)$, and therefore  Lemma \ref{lemma_estimate} allows us to conclude that
 \begin{equation} \label{bigOtm}
  \| \e^{t A} \, \e^{t B} - \e^{Z^{[m]}(tA, tB) } \| = \mathcal{O}(t^{m+1}),
\end{equation}
where the constant implicit in \eqref{bigOtm} depends on the remaining terms of the series.

\subsection{Error bounds}

We are now interested in providing explicit error bounds when the BCH series is truncated, so as to render the constant involved in the estimate
\eqref{bigOtm} fully explicit. Thus, the problem we are addressing can be formulated as follows.

\

\noindent
\textbf{Problem 1:} \emph{Given the finite sum $Z^{[m]}(tA, tB)$ defined in \eqref{def_Z}, obtain rigorous bounds for
 \[
     \left\| \e^{t A} \, \e^{t B} - \e^{Z^{[m]}(tA, tB) } \right\| 
 \]
 in terms of commutators of $A$ and $B$, and in particular determine the constant implicit in  \eqref{bigOtm}}. 
  
 \
 
 The strategy we follow is similar to the one applied in the previous subsection: we expand the operators $M_1(t)$ and $M_2(t)$
 associated with $ \e^{t A} \, \e^{t B}$ and $\e^{Z^{[m]}(tA, tB) }$, 
 \begin{equation} \label{M1M2_trunc}
   M_1(t) = A + \e^{t \, \ad_{A}} B, \qquad\qquad  M_2(t) = \int_0^1 \e^{x \, \ad_{Z^{[m]}}} \big(\dot{Z}^{[m]}\big) dx,
 \end{equation}
 respectively, in powers of $t$ up to degree $m$ with remainder in integral form,
then get appropriate bounds for the difference $M_1(t) - M_2(t)$ and finally use Lemma \ref{lemma_estimate}. The simplest case $m=1$ illustrates the
 general procedure.
 
 We have
 \[
 \begin{aligned}
  & M_1(s) = A + B + s \int_0^1 \e^{ s(1-\sigma)\, \ad_{A}} \ad_A B \, d\sigma, \\
  & M_2(s) = \int_0^1 \e^{x \, \ad_{Z^{[1]}}} \big(\dot{Z}^{[1]}\big) dx = \int_0^1 \dot{Z}^{[1]} dx = A + B,
 \end{aligned}
 \]
 so that
 \[ 
  \|M_1(s) - M_2(s)\| =  \left\|  s \int_0^1 \e^{ s(1-\sigma)\, \ad_{A}} \ad_A B \, d\sigma \right\| \le s \int_0^1 \|  \ad_A B \| \, d \sigma = s \|  \ad_A B \|,
\]
and Lemma \ref{lemma_estimate} leads finally to the well known result for the Lie--Trotter approximation \cite{suzuki85dfo,iserles24aea}:
\begin{equation} \label{lie_trotter}
  \| \e^{t A} \, \e^{t B} - \e^{t(A+B)}\| \le \frac{t^2}{2} \, \|[A,B]\|.
\end{equation}  

\

In the general case we will apply  Lemma \ref{l:lemma2} by truncating the expansions of $M_1$ and $M_2$ in such a way that
$M_1(s) - M_2(s) = \mathcal{O}(s^m)$. The outcome is the following theorem.

\begin{theorem}\label{th:general_bound} 
Let $A$ and $B$ be skew-adjoint operators, and $Z^{[m]}(tA, tB)$ denote the sum of the first $m \ge 2$ 
terms of the Baker--Campbell--Hausdorff series and $t > 0$. Then
\[
   \left\| \e^{t A} \, \e^{t B} - \e^{Z^{[m]}(tA, tB) } \right\| \le \int_0^t \|  \mathcal{R}^{[m]}(s) \| ds + \frac{t^{m+1}}{(m+1)!} \|\ad_A^m B\| + 
   \frac{1}{m!} \int_0^t \|\ad_{Z^{[m]}}^{m-1} \dot{Z}^{[m]} \| ds
\]
where 
\begin{equation} \label{def_R}
    \mathcal{R}^{[m]}(s) = A + \sum_{k=0}^{m-1} \frac{s^k}{k!} \ad_A^k B - \sum_{k=0}^{m-2} \frac{1}{(k+1)!} \, \ad_{Z^{[m]}}^k \dot{Z}^{[m]} ,
\end{equation}
and $ \dot{Z}^{[m]}$ denotes the derivative of $Z^{[m]}(sA,sB)$ with respect to $s$.
\end{theorem}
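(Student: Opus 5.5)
We apply Lemma \ref{lemma_estimate} to $F_1(t)=\e^{tA}\e^{tB}$ and $F_2(t)=\e^{Z^{[m]}(tA,tB)}$. Both are unitary because $A$, $B$ are skew-adjoint and $Z^{[m]}(tA,tB)$ is a real linear combination of nested commutators of skew-adjoint operators, hence skew-adjoint. Their generators are $M_1(s)$ and $M_2(s)$ as in \eqref{M1M2_trunc}, and $M_2(s)$ is skew-adjoint as well. Indeed, each $\e^{x\,\ad_{Z^{[m]}}}\dot Z^{[m]} = \e^{xZ^{[m]}}\dot Z^{[m]}\e^{-xZ^{[m]}}$ is skew-adjoint, since $\dot Z^{[m]}$ is. The whole proof then reduces to bounding $\|M_1(s)-M_2(s)\|$ pointwise in $s$ and integrating over $[0,t]$.

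\emph{Expansion of $M_1$.} Apply Lemma \ref{l:lemma2} with $X=A$, $Y=B$, $n=m$:
\[
M_1(s)=A+\sum_{k=0}^{m-1}\frac{s^k}{k!}\ad_A^kB+s^mG_m(s,A,B),
\qquad \|s^mG_m(s,A,B)\|\le \frac{s^m}{m!}\|\ad_A^mB\|.
\]
Integrating $s^m/m!$ over $[0,t]$ gives the middle term $\frac{t^{m+1}}{(m+1)!}\|\ad_A^mB\|$ of the statement.

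\emph{Expansion of $M_2$.} Write $M_2(s)=\int_0^1 \e^{x\,\ad_{Z^{[m]}}}\dot Z^{[m]}\,dx$. For each fixed $x$, apply Lemma \ref{l:lemma2} with $X=Z^{[m]}$ (skew-adjoint), $Y=\dot Z^{[m]}$, $n=m-1$, with $x$ playing the role of $s$. This gives
\[
\e^{x\,\ad_{Z^{[m]}}}\dot Z^{[m]}=\sum_{k=0}^{m-2}\frac{x^k}{k!}\ad_{Z^{[m]}}^k\dot Z^{[m]}+x^{m-1}G_{m-1}\big(x,Z^{[m]},\dot Z^{[m]}\big).
\]
This step is valid since $m\ge 2$ ensures $n=m-1\ge1$. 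Integrating in $x\in[0,1]$ turns $x^k/k!$ into $1/(k+1)!$, so
\[
M_2(s)=\sum_{k=0}^{m-2}\frac{1}{(k+1)!}\ad_{Z^{[m]}}^k\dot Z^{[m]}+\int_0^1x^{m-1}G_{m-1}\,dx .
\]
By \eqref{norm_G}, the remainder has norm at most $\int_0^1 x^{m-1}dx\cdot\frac{1}{(m-1)!}\|\ad_{Z^{[m]}}^{m-1}\dot Z^{[m]}\| = \frac{1}{m!}\|\ad_{Z^{[m]}}^{m-1}\dot Z^{[m]}\|$.

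\emph{Conclusion.} Subtracting the two expansions, $M_1(s)-M_2(s)=\mathcal R^{[m]}(s)+s^mG_m-\int_0^1x^{m-1}G_{m-1}\,dx$, with $\mathcal R^{[m]}$ exactly as in \eqref{def_R}. The triangle inequality combined with Lemma \ref{lemma_estimate} yields the three terms of the bound.

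The only delicate point is justifying the use of \eqref{norm_G} for $M_2$. It requires $Z^{[m]}(sA,sB)$ to be skew-adjoint so that $\e^{\sigma\,\ad_{Z^{[m]}}}$ preserves norms. This holds because the coefficients $\Phi_j$ are real Lie polynomials and the commutator of skew-adjoint operators is skew-adjoint. Note also that no cancellation inside $\mathcal R^{[m]}$ is needed for the stated inequality. It is only used afterwards to see that $\mathcal R^{[m]}(s)=\mathcal O(s^m)$, consistent with \eqref{cons_terms}.
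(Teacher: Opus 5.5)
Your proposal is correct and follows essentially the same argument as the paper. The paper also expands $M_1$ and $M_2$ with Lemma \ref{l:lemma2} (taking $n=m$ and $n=m-1$ respectively), bounds the two remainders via \eqref{norm_G}, and integrates using Lemma \ref{lemma_estimate}. Your explicit check that $Z^{[m]}$, $\dot Z^{[m]}$ and $M_2$ are skew-adjoint, as real Lie polynomials in skew-adjoint operators, justifies a hypothesis the paper uses tacitly.
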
   
   
\begin{proof}
Application of Lemma \ref{l:lemma2} to \eqref{M1M2_trunc} gives
\[
 \begin{aligned}
  & M_1(s) = A + \sum_{k=0}^{m-1} \frac{s^k}{k!} \ad_A^k B + s^m G_m(s,A,B) \\
  & M_2(s) =  \sum_{k=0}^{m-2} \frac{1}{(k+1)!} \ad_{Z^{[m]}}^k \dot{Z}^{[m]} + \int_0^1 x^{m-1} G_{m-1}(x,Z^{[m]},  \dot{Z}^{[m]}) dx,
 \end{aligned}
\]
so that
\[
 \|M_1(s) - M_2(s)\| \le \|  \mathcal{R}^{[m]}(s) \| + s^m \|G_m(s,A,B)\| +  \int_0^1 x^{m-1} \| G_{m-1}(x,Z^{[m]},  \dot{Z}^{[m]})\| dx.
\]
Notice that $\mathcal{R}^{[m]}(s)$ defined by \eqref{def_R} verifies 
\[
   \mathcal{R}^{[m]}(s) = \mathcal{O}(s^m),
\]
since all terms up to $s^{m-1}$ vanish due to the way the successive terms $\Phi_1, \ldots, \Phi_m$ in the BCH series are obtained.  In fact, 
the term in $s^m$ comes from
$\sum_{k=1}^{m-2} \frac{1}{(k+1)!} \, \ad_{Z^{[m]}}^k \dot{Z}^{[m]}$.

Now, from \eqref{norm_G}, we have
  $\|G_m(s,A,B)\| \le  \frac{1}{m!} \|\ad_A^m B\|$
and
\begin{equation} \label{norm_G2}
\begin{aligned}
   & \int_0^1 x^{m-1} \| G_{m-1}(x,Z^{[m]},  \dot{Z}^{[m]})\| dx   \le  \int_0^1 \frac{x^{m-1}}{(m-1)!} 
    \| \ad_{Z^{[m]}}^{m-1} \dot{Z}^{[m]} \| \, dx   \\
  & \qquad\quad = \frac{1}{m!} \| \ad_{Z^{[m]}}^{m-1} \dot{Z}^{[m]} \|.
\end{aligned}    
\end{equation}  
Lemma \ref{lemma_estimate} leads finally to the desired result. 
\end{proof}   

If we specify Theorem \ref{th:general_bound} to the case $m=2$, we have
\[
 Z^{[2]}(sA,sB)) = s(A+B) + \frac{s^2}{2} [A,B], \qquad \dot{Z}^{[2]}(sA,sB)) = A+ B + s \, [A, B],
\]
so that
\[
  \ad_{Z^{[2]}} \big( \dot{Z}^{[2]} \big) = \frac{s^2}{2} \big(  [A,[A,B]] + [B,[A,B]] \big).
\]
In consequence,  $\mathcal{R}^{[2]}(s) = 0$ and
\begin{equation} \label{new_b}
   \left\| \e^{t A} \, \e^{t B} - \e^{Z^{[2]}(tA, tB) } \right\|  \le \frac{t^3}{6} \|\ad_A^2 B\| +  \frac{t^3}{12} 
     \big\|[A,[A,B]] +  [B,[A,B]] \big\|. 
\end{equation}
If we apply the triangular inequality to the last term in \eqref{new_b}, we finally get 
\begin{equation} \label{new_b_2}
   \left\| \e^{t A} \, \e^{t B} - \e^{Z^{[2]}(tA, tB) } \right\|  \le \frac{t^3}{4} \|[A,[A,B]]\| + \frac{t^3}{12} \|[B,[A,B]]\|,
\end{equation}     
thus recovering \eqref{nt3} with $t=1$. 

Theorem \ref{th:general_bound} gives the following structure for the error bound.

\begin{corollary}
With the same hypothesis as in Theorem \ref{th:general_bound}, it follows that
\begin{equation} \label{bound_pol}
   \left\| \e^{t A} \, \e^{t B} - \e^{Z^{[m]}(tA, tB) } \right\| \le \sum_{j=m+1}^{m^2 -1} t^j \, C_j^{[m]}(A,B),
\end{equation}
where $C_j^{[m]}(A,B)$ are linear combinations of norms of nested commutators involving $j$ operators $A$ and $B$.
\end{corollary}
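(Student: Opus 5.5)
The plan is to take the three terms in the bound of Theorem~\ref{th:general_bound}, expand each as a polynomial in $s$ whose coefficients are Lie polynomials in $A,B$, and then apply the triangle inequality coefficientwise. For a fixed $m$ the operators $Z^{[m]}(sA,sB)=\sum_{j=1}^m s^j\Phi_j$ and $\dot Z^{[m]}=\sum_{j=1}^m j s^{j-1}\Phi_j$ are polynomials in $s$. The coefficient of $s^{j}$ in $Z^{[m]}$ is the homogeneous Lie polynomial $\Phi_j$ of degree $j$. The coefficient of $s^{j-1}$ in $\dot Z^{[m]}$ is $j\Phi_j$, again of degree $j$. By multilinearity of the bracket,
\[
\ad_{Z^{[m]}}^k \dot Z^{[m]}=\sum_{j_1,\dots,j_k,j_0} s^{j_1+\cdots+j_k+j_0-1}\, j_0\,[\Phi_{j_1},[\cdots,[\Phi_{j_k},\Phi_{j_0}]\cdots]],
\]
with all indices in $\{1,\dots,m\}$. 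Hence the coefficient of $s^{r}$ is a linear combination of nested commutators containing exactly $r+1$ operators $A$ or $B$. This follows once each $\Phi_j$ is expanded, say in the Hall basis, as a rational combination of nested commutators of length $j$. The same homogeneity bookkeeping covers $\mathcal R^{[m]}(s)$: it is a polynomial in $s$ whose $s^r$ coefficient consists of commutators with $r+1$ letters.

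Next, integrate in $s$ and track degrees. After the triangle inequality, each term $s^r\|P_r\|$, with $P_r$ a Lie polynomial of degree $r+1$, integrates to $\frac{t^{r+1}}{r+1}\|P_r\|$. Expanding $P_r$ and applying the triangle inequality once more bounds $\|P_r\|$ by a nonnegative combination of norms of nested commutators with $r+1$ letters. So the exponent of $t$ always equals the number of operators in the commutators, which is exactly the structure of $C_j^{[m]}$.

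The remaining step is to pin down the range of $j$.
\begin{itemize}
\item The middle term contributes only $t^{m+1}$, through $\|\ad_A^m B\|$.
\item In the last term, the top power of $s$ in $\ad_{Z^{[m]}}^{m-1}\dot Z^{[m]}$ is $m(m-1)+(m-1)=m^2-1$. After integration this gives $t^{m^2}$. The lowest power is $s^{m-1}$, giving $t^m\cdot\frac{1}{m!}\cdot\frac1m$.
\item For $\mathcal R^{[m]}$, the proof of the theorem already shows $\mathcal R^{[m]}=\mathcal O(s^m)$. Its top degree comes from $k=m-2$, namely $s^{(m-2)m+m-1}=s^{m^2-m-1}$.
\end{itemize}

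This is where I expect the main obstacle: reconciling the lowest and highest powers with the stated range $m+1\le j\le m^2-1$. At the low end, the $s^{m-1}$ coefficient of $\ad_{Z}^{m-1}\dot Z$ is $\ad_{\Phi_1}^{m-1}\Phi_1=0$, so the lowest surviving power is $s^m$, i.e.\ $t^{m+1}$. The same argument applied to the top degree would annihilate it: $[\Phi_m,\cdot]$ of $m\Phi_m$ vanishes when all indices are maximal. The top surviving exponent should then be one less, which lands at $t^{m^2-1}$ after integration. I would verify this by checking explicitly which index tuples give nonzero brackets, using $[\Phi_j,\Phi_j]=0$ for the innermost bracket, and confirm the count on $m=2$. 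For $m=2$ only $t^3$ should appear, matching \eqref{new_b_2}. If the exact upper index turns out slightly off, the structural statement still holds with the corrected upper limit, since finitely many powers appear in any case.
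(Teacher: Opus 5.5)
Your proposal is correct and is essentially the paper's own proof. Like the paper, you expand the three terms of Theorem~\ref{th:general_bound} as polynomials in $s$ whose coefficients are homogeneous Lie polynomials. You note that the extreme coefficients $\ad_{\Phi_1}^{m-1}\Phi_1$ and $\ad_{\Phi_m}^{m-1}(m\Phi_m)$ vanish, so $\ad_{Z^{[m]}}^{m-1}\dot Z^{[m]}$ runs only from $s^m$ to $s^{m^2-2}$, and then you integrate. Your closing hedge is unnecessary: the all-$m$ index tuple is the only one producing $s^{m^2-1}$, and the all-$1$ tuple is the only one producing $s^{m-1}$, so the range $m+1\le j\le m^2-1$ is already established exactly.
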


\begin{proof}
Starting from  $Z^{[m]}(sA,sB) = s \Phi_1 + \cdots + s^m \Phi_m$, a straightforward calculation shows that
\[
  \ad_{Z^{[m]}}^{k} \dot{Z}^{[m]} = \sum_{j = k+1}^{(k+1)m-2} f_j^{(k)} \, s^j, \qquad\quad k \ge 1.
\]
Here $f_j^{(k)}$ denote nested commutators involving $k+1$ operators from $\Phi_1, \ldots, \Phi_m$ (but $j+1$ operators
$A$, $B$). In consequence,
\begin{equation} \label{exp_Zm}
  \ad_{Z^{[m]}}^{m-1} \dot{Z}^{[m]} = \sum_{j = m}^{m^2-2} f_j^{(m-1)} \, s^j.
\end{equation}
On the other hand, from \eqref{def_R}, and recalling that $ \mathcal{R}^{[m]}(s) = \mathcal{O}(s^m)$, application of the previous
result leads to
\begin{equation} \label{exp_R}
   \mathcal{R}^{[m]}(s) = \sum_{j=m}^{(m-1)m -2} r_j \, s^j 
\end{equation}
for some coefficients $r_j$ which are again nested commutators involving $\Phi_1, \Phi_2, \Phi_3, \ldots$. 
Application of  Theorem \ref{th:general_bound} leads then to the
desired result.
\end{proof} 

In fact, we can get more insight into the bound \eqref{bound_pol} by computing explicitly the coefficients involved. Thus, in \eqref{exp_Zm}
it holds that 
\begin{equation} \label{expre_f}
  f_j^{(m-1)} = \ \sum_{(i_1,i_2,\ldots,i_m)} \  i_m \, \ad_{\Phi_{i_1}}  \ad_{\Phi_{i_2}}  \cdots  \, \ad_{\Phi_{i_{m-1}}}  \Phi_{i_m},
\end{equation}
where the sum is extended over all partitions $(i_1,i_2,\ldots,i_m)$ of the integer $j+1$ of length $m$, 
so that $i_1 + i_2 + \cdots +i_m = j+1$, and $i_{m-1} \ne i_m$ (otherwise the commutator vanishes). 
An analogous expression follows
for the $r_j$ coefficients in \eqref{exp_R}, namely
\begin{equation} \label{expre_r}
  r_j = \sum_{\ell = 2}^{m-1} \frac{1}{\ell!} \ \sum_{(i_1,i_2,\ldots,i_{\ell})} \  i_{\ell} \, 
  \ad_{\Phi_{i_1}}  \ad_{\Phi_{i_2}}  \cdots  \, \ad_{\Phi_{i_{\ell-1}}}  \Phi_{i_{\ell}}.
\end{equation}
Adding up all the contributions according with Theorem \ref{th:general_bound} and carrying out the integration in $s$, we arrive at
\begin{equation} \label{ref_b}
\begin{aligned}
    \left\| \e^{t A} \, \e^{t B} - \e^{Z^{[m]}(tA, tB) } \right\| & \le \left\| \sum_{j = m+1}^{(m-1)m-1} \frac{1}{j}  r_{j-1} \, t^j \right\| + 
    \frac{1}{(m+1)!} \| \ad_A^m B\| \, t^{m+1} \\
    & +  \frac{1}{m!} \left\| \sum_{j = m+1}^{m^2-1} \frac{1}{j}  \, f_{j-1}^{(m-1)} \, t^j \right\|.
\end{aligned}    
\end{equation}    
The explicit expression of $C_j^{[m]}$ are obtained by applying the triangle inequality to \eqref{ref_b}, then by inserting
 the expressions of $\Phi_1, \ldots, \Phi_m$ in terms of $A$, $B$ 
into \eqref{expre_f} and \eqref{expre_r}, carrying out the computations involved and then writing the result
 in a basis of the free Lie algebra $\mathcal{L}(A,B)$. The final bound is then
a polynomial in the parameter \(t\) of degree \(m^2 - 1\), whose lowest-order term is \(t^{m+1}\). Only when $m=2$ does the expansion reduce 
 to a single term proportional to \(t^3\). The whole procedure can be  implemented with a
computer algebra system, although the computational complexity grows exponentially with $m$. With a \emph{Mathematica} code, we have obtained
the explicit expression of \eqref{ref_b} when $m=3$ (collected in the Appendix), whereas the bound \eqref{bound_pol} for $m=4$ is available at the website 
\begin{center}
  \url{http://www.gicas.uji.es/Research/bch.html}
\end{center}  
in the classical Hall
basis of the free Lie algebra $\mathcal{L}(A,B)$. For completeness, we have also included similar results in terms of independent right-nested commutators.

Notice that the first term $C_{m+1}^{[m]}$ in the error bound \eqref{bound_pol}, i.e., the main term in the error expansion, is particularly easy to compute, since 
\[
  f_{m}^{(m-1)} = \ad_{\Phi_1}^{m-1} \Phi_2 = \frac{1}{2} \ad_{A+B}^{m-1} \, \ad_A B,
\]  
so that
\[
  C_{m+1}^{[m]} = \frac{1}{m+1} \|r_m\| + \frac{1}{(m+1)!} \left( \| \ad_A^m B\| + \frac{1}{2} \|\ad_{A+B}^{m-1} \, \ad_A B\| \right).
\]
For completeness, we also collect at the same website the expression of $C_{m+1}^{[m]}$
for $m=3,\ldots,7$.

\section{An illustrative example}

To illustrate the sharpness of the previous bounds, we consider the following simple example. 
We first generate two $20 \times 20$ complex matrices whose entries (both real and imaginary parts) 
are drawn from the standard normal distribution. From these, we construct skew-adjoint matrices 
in the usual way, compute their spectral norms, and then rescale them to obtain matrices $A$ and $B$ 
satisfying $\|A\| = \|B\| = 1$.

Next, we select 23 values of $t$ in the interval $[0.001, 4]$ and, for each value, evaluate the matrices 
$\e^{tA}\e^{tB}$ and $\e^{Z^{[2]}}$. We then compute
\begin{equation} \label{act_err_2}
E_r^{[2]} \equiv \|\e^{tA}\e^{tB} - \e^{Z^{[2]}}\|
\end{equation}
and compare it with the bounds
\begin{equation} \label{bounds_2_3}
\begin{aligned}
  B_1^{[2]} &\equiv \frac{t^3}{6} \|[A,[A,B]] \| 
  + \frac{t^3}{12} \big\|[A,[A,B]] + [B,[A,B]]\big\|, \\
  B_2^{[2]} &\equiv \frac{t^3}{4} \|[A,[A,B]]\| 
  + \frac{t^3}{12} \|[B,[A,B]]\|.
\end{aligned}
\end{equation}
Notice that $B_2^{[2]}$ is obtained by applying the triangular inequality to $B_1^{[2]}$.
This procedure yields Figure~\ref{figure.1}. Observe that the $t^2$ scaling of $E_r^{[2]}$ 
persists even for relatively large errors. Moreover, $B_1^{[2]}$ and $B_2^{[2]}$ differ only slightly, 
and both remain very close to the exact result until the actual error is of the order of 1. In particular, 
the ratio $E_r^{[2]} / B_1^{[2]}$ ranges from $0.369$ for $t = 0.001$ to $0.388$ for $t = 1$. 
Repeating the experiment over 300 additional random trials yields an average value of approximately 
$0.427$ for this ratio at $t = 1$.

\begin{figure}[!ht] 
\centering
  \includegraphics[width=.8\textwidth]{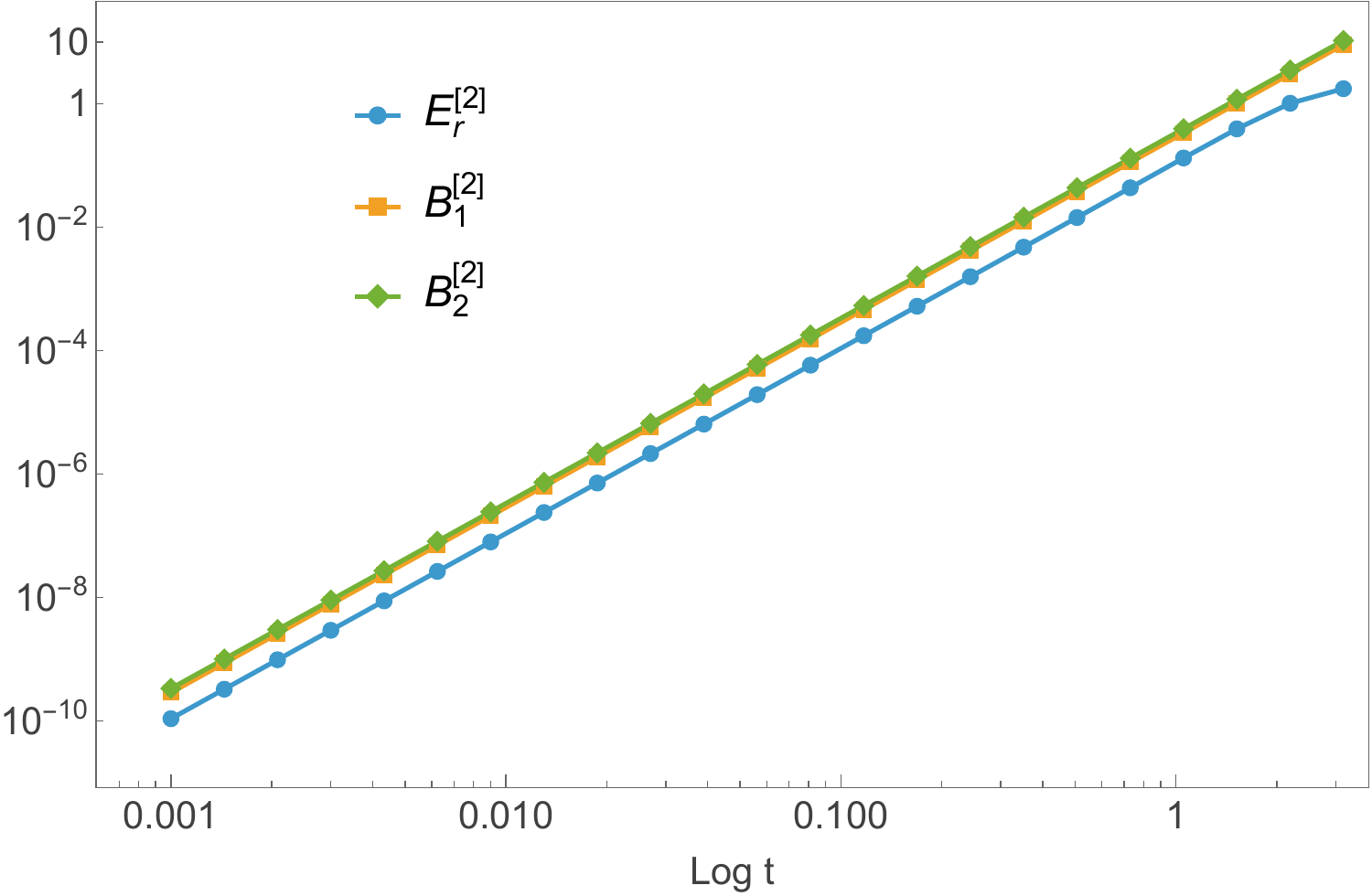}
\caption{\label{figure.1} \small Actual error \eqref{act_err_2} committed by the truncated second-order BCH formula, together with the bounds 
\eqref{bounds_2_3} for two skew-adjoint $20 \times 20$ matrices $A$ and $B$. The difference between $B_1^{[2]}$ and $B_2^{[2]}$ is almost negligible in
this case, and they overestimate the actual error by less than a factor 3.
}
\end{figure}

Figure \ref{figure.2} corresponds to the truncated BCH formula after $m=3$ terms. Specifically, we represent the actual error 
$E_r^{[3]} \equiv  \|\e^{t A} \e^{t B} - \e^{Z^{[3]}}\|$ as a function of $t$, together with the bound $B_1^{[3]}$ collected in the Appendix, eq.
\eqref{B1_3}, and the corresponding expression $B_2^{[3]}$ obtained from $B_1^{[3]}$ by applying the triangle inequality to each term $C_j^{[3]}$. Again, the
difference between $B_1^{[3]}$ and $B_2^{[3]}$ is almost negligible, and both of them differ from the exact result by an order of magnitude.

\begin{figure}[!ht] 
\centering
  \includegraphics[width=.8\textwidth]{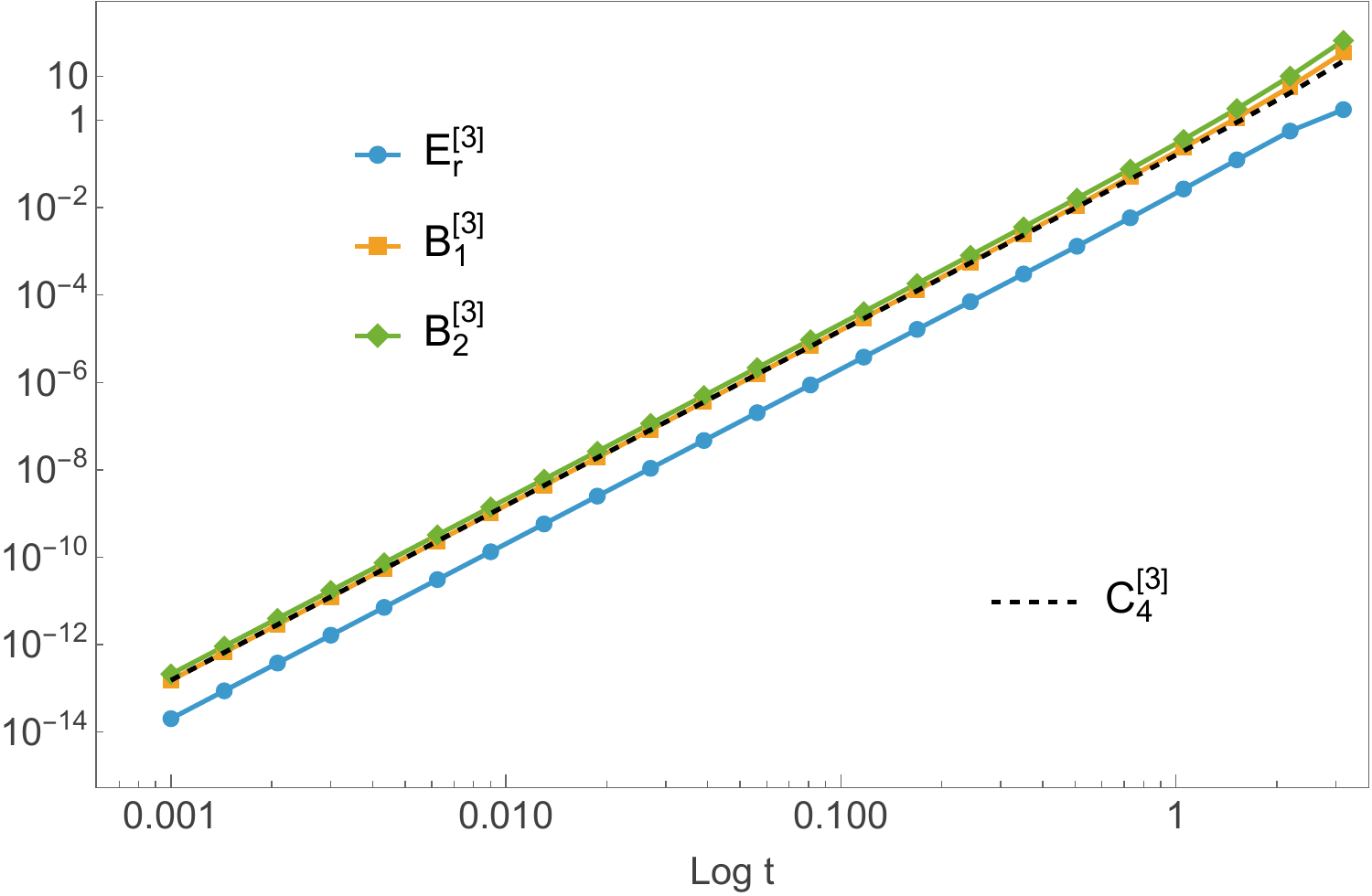}
\caption{\label{figure.2} \small Actual error $E_r^{[3]}$ committed by the truncated second-order BCH formula, together with the theoretical bounds 
obtained in this work for two skew-adjoint $20 \times 20$ matrices $A$ and $B$.  The dashed line corresponds to the leading term in the theoretical bound
\eqref{bound_pol}.
}
\end{figure}

We have also included the result obtained by retaining only the leading term, $t^4 C_4^{[3]}$, in the polynomial bound \eqref{bound_pol}. As can be seen from the figure, this dominant contribution already provides a remarkably accurate approximation to the theoretical bound. Including additional terms does not significantly deteriorate the estimate unless $t$ takes large values, where the bound itself becomes irrelevant. 
A similar behavior is observed for larger values of $m$.

\section{The truncated BCH formula with $N > 2$ operators}
\label{general_case}

Consider now the BCH formula involving any number of skew-adjoint matrices, $A_1, \ldots$, $A_N$,
\[
  \exp(t A_N) \, \cdots \, \exp(t A_2) \, \exp(t A_1) = \exp\big(\Phi(A_N, \ldots,  A_2,  A_1)\big),
\]  
where
\begin{equation} \label{bch_series_g}
  \Phi(t A_N, \ldots, t A_2, t A_1) = \sum_{j \ge 1} t^j \, \Phi_j(A_N, \ldots, A_2, A_1)
\end{equation}
and $\Phi_n$ can be written as a linear combination of commutators involving $n$ operators from the set $\{A_1, \ldots, A_N\}$.  Explicit expressions
of $\Phi_n$ can be obtained in terms of products of the form $A_{\sigma(1)} A_{\sigma(2)} \cdots A_{\sigma(n)}$ for all permutations $\sigma$ of
$\{1, 2, \ldots, n\}$ (see \cite{burgunder08eia,dynkin49otr,loday94sdh} for a detailed treatment) and then applying either the Dynkin--Specht--Weber theorem 
\cite{jacobson79la} or the algorithm presented in \cite{arnal21ano}.

The previous procedure can also be applied here when the series \eqref{bch_series_g} is truncated after, say, $m$ terms to get 
 a bound for $\| \e^{t A_N} \, \cdots \, \e^{t A_2} \, \e^{t A_1} - \e^{Z^{[m]}(t)} \|$, with
\begin{equation} \label{new_Z}
  Z^{[m]}(t) = \sum_{j=1}^m t^j \, \Phi_j(A_N, \ldots, A_1).
\end{equation}  
According with the proof of Theorem \ref{th:general_bound}, we have to determine the operators $M_1(t)$ and $M_2(t)$ related with
\[
  F_1(t) =  \e^{t A_N} \, \cdots \, \e^{t A_2} \, \e^{t A_1}  \qquad \mbox{ and } \qquad F_2(t) = \exp \big(Z^{[m]}(t) \big),
\]
respectively, via \eqref{def_emes}. In fact, a recursive procedure to get  $M_1(t)$ has been recently obtained in \cite{casas26ebf} in the context
of the error analysis of splitting methods. Specifically,
\[
 M_1(t) =  \sum_{n=0}^{m-1} t^n \, C_{N,n} + t^m \, V_{N,m}(t),
\]
where the terms $C_{N,n}$ are determined through  
\begin{equation} \label{recur_C}
\begin{aligned}
 & C_{k,0} = A_1 + \cdots + A_k, \qquad 1 \le k \le N \\
 & C_{1,n} = 0, \qquad n \ge 1 \\
 & C_{k,n} = \sum_{j=0}^n \frac{1}{j!} \,\ad_{A_k}^j C_{k-1, n-j},\qquad 2 \le k \le N, \;\; n\geq 1,
\end{aligned}
\end{equation}
whereas the remainder is computed from
\[
\begin{aligned}
  & V_{1,m}(t) = 0 \\
  & V_{k,m}(t) = \sum_{n=1}^m G_n(t, A_k, C_{k-1,m-n}) + \e^{t \, \ad_{A_k}} V_{k-1,m}(t), \qquad k \ge 2.
\end{aligned}
\]  
Taking into account the expression of $G_n(t, A_k, C_{k-1,m-n})$ and the bound \eqref{norm_G}, we get
\[
  \| V_{k,m}(t) \| \le  \int_0^1 \left\| \sum_{n=1}^m \frac{\sigma^{n-1}}{(n-1)!} \, \ad_{A_k}^n C_{k-1,m-n}  \right\| d \sigma+ \|V_{k-1,m}(t)\|,
\]  
so that
\begin{equation} \label{norm_V_N}
   \| V_{N,m}(t) \| \le \sum_{k=2}^N   \int_0^1 \left\| \sum_{n=1}^m \frac{\sigma^{n-1}}{(n-1)!} \, \ad_{A_k}^n C_{k-1,m-n}  \right\| d \sigma.
\end{equation}   
With respect to $M_2(t)$, we have as before
\[
  M_2(t) =  \sum_{k=0}^{m-2} \frac{1}{(k+1)!} \ad_{Z^{[m]}}^k \dot{Z}^{[m]} + \int_0^1 x^{m-1} G_{m-1}(x,Z^{[m]},  \dot{Z}^{[m]}) dx,
\]
with $Z^{[m]}$  given by \eqref{new_Z}. In consequence,
\[
  \| M_1(s) - M_2(s)\| \le \|\mathcal{R}^{[m]}(s)\| + s^m \|V_{N,m}(s)\| + \int_0^1 x^{m-1} \| G_{m-1}(x,Z^{[m]},  \dot{Z}^{[m]})\| dx,
\]
with
\begin{equation} \label{def_r_2}
      \mathcal{R}^{[m]}(s) =  \sum_{k=0}^{m-1} s^k \, C_{N,k} - \sum_{k=0}^{m-2} \frac{1}{(k+1)!} \, \ad_{Z^{[m]}}^k \dot{Z}^{[m]}.
\end{equation}      
Finally, Lemma \ref{lemma_estimate} together with eqs. \eqref{norm_G} and  \eqref{norm_V_N} leads to the following result.

\begin{theorem}\label{th:Ng2} 
Let $A_1, A_2, \ldots, A_N$ be $N$ skew-adjoint operators and $Z^{[m]}(t)$ denote the sum of the first $m \ge 2$ 
terms of the Baker--Campbell--Hausdorff series \eqref{bch_series_g} with $t > 0$. Then
\[
\begin{aligned}
    \left\| \e^{t A_N} \, \cdots \, \e^{t A_1} - \e^{Z^{[m]}(t)} \right\| & \le \int_0^t \|  \mathcal{R}^{[m]}(s) \| ds + 
    \frac{1}{m!} \int_0^t \|\ad_{Z^{[m]}}^{m-1} \dot{Z}^{[m]} \| ds  \\
    & \;\; + \frac{t^{m+1}}{m+1} \, \sum_{k=2}^N   \int_0^1 \left\| \sum_{n=1}^m \frac{\sigma^{n-1}}{(n-1)!} \, \ad_{A_k}^n C_{k-1,m-n}  \right\| d \sigma,
\end{aligned}   
\]
where $\mathcal{R}^{[m]}(s)$ is given by \eqref{def_r_2} and $C_{k,n}$ are obtained with the recurrence \eqref{recur_C}.
\end{theorem}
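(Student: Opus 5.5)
The plan mirrors the proof of Theorem \ref{th:general_bound}, with the two-operator expression for $M_1$ replaced by the recursive expansion of \cite{casas26ebf}. Set $F_1(t)=\e^{tA_N}\cdots\e^{tA_1}$ and $F_2(t)=\exp(Z^{[m]}(t))$. Since each $A_k$ is skew-adjoint, $F_1$ is unitary. The Lie polynomial $Z^{[m]}(t)$ is skew-adjoint for real $t$, because commutators of skew-adjoint operators with real coefficients are skew-adjoint, so $F_2$ is unitary as well. Both satisfy \eqref{ivp_2op} with $M_i=\dot F_i F_i^{-1}$, and these $M_i$ are skew-adjoint. Lemma \ref{lemma_estimate} then reduces everything to bounding $\int_0^t\|M_1(s)-M_2(s)\|\,ds$.

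\textbf{Expanding $M_1$.} I would verify the expansion of $M_1$ by induction on $k$. Put $F^{(k)}=\e^{tA_k}F^{(k-1)}$. Then $M^{(k)}=A_k+\e^{t\,\ad_{A_k}}M^{(k-1)}$. Insert $M^{(k-1)}=\sum_{n<m}t^nC_{k-1,n}+t^mV_{k-1,m}$ and expand each $\e^{t\,\ad_{A_k}}C_{k-1,n}$ by Lemma \ref{l:lemma2} to order $m-n$. The terms of degree $<m$ reproduce the recurrence \eqref{recur_C}. The remainders $t^{m}G_{m-n}(t,A_k,C_{k-1,n})$, after reindexing, together with $t^m\e^{t\,\ad_{A_k}}V_{k-1,m}$ give the stated formula for $V_{k,m}$.

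For the norm, I would write $\sum_n G_n(t,A_k,C_{k-1,m-n})$ as a single $\sigma$-integral of $\e^{t(1-\sigma)\ad_{A_k}}$ applied to $\sum_n\frac{\sigma^{n-1}}{(n-1)!}\ad_{A_k}^nC_{k-1,m-n}$. Unitary conjugation preserves norms, so this integral is bounded by the integral of the norm of that sum. The same invariance gives $\|\e^{t\,\ad_{A_k}}V_{k-1,m}\|=\|V_{k-1,m}\|$. Telescoping in $k$ yields \eqref{norm_V_N}. The resulting bound is independent of $t$.

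\textbf{Expanding $M_2$ and concluding.} For $M_2$, I would use the derivative-of-exponential formula $M_2=\int_0^1\e^{x\,\ad_{Z^{[m]}}}\dot Z^{[m]}\,dx$. Expand the integrand by Lemma \ref{l:lemma2} to order $m-1$ and integrate in $x$, exactly as in Theorem \ref{th:general_bound}. This gives the terms with $\ad^k_{Z^{[m]}}$ for $k\le m-2$, plus a remainder bounded via \eqref{norm_G2} by $\frac{1}{m!}\|\ad_{Z^{[m]}}^{m-1}\dot Z^{[m]}\|$, using that $Z^{[m]}$ is skew-adjoint. Subtracting leaves $\mathcal R^{[m]}(s)$ from \eqref{def_r_2}, plus $s^mV_{N,m}(s)$, plus the $M_2$ remainder. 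By the triangle inequality and integration in $s$, the $V$ term contributes $\int_0^t s^m\,ds=\frac{t^{m+1}}{m+1}$ times the $t$-independent bound \eqref{norm_V_N}. This is the stated inequality.

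The main obstacle is the induction for the $M_1$ expansion. The indices in $G_n(t,A_k,C_{k-1,m-n})$ must come out exactly right, namely the expansion of $\e^{t\,\ad_{A_k}}C_{k-1,j}$ truncated at order $m-j$. One must also confirm that the degree $<m$ terms match \eqref{recur_C}; this is a Cauchy product of the series of $\e^{t\,\ad_{A_k}}$ with that of $M^{(k-1)}$. Everything else is identical to the two-operator case. The remark that $\mathcal R^{[m]}=\mathcal O(s^m)$ is not needed for the inequality itself; it only explains why the bound is $\mathcal O(t^{m+1})$.
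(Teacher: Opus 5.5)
Your proposal is correct and follows essentially the same route as the paper: Lemma~\ref{lemma_estimate} applied to $F_1=\e^{tA_N}\cdots\e^{tA_1}$ and $F_2=\exp(Z^{[m]}(t))$, the expansion $M_1=\sum_{n<m}t^nC_{N,n}+t^mV_{N,m}$ with the telescoped bound \eqref{norm_V_N}, and the same treatment of $M_2$ as in Theorem~\ref{th:general_bound}. The only differences are that you derive the $M_1$ expansion by induction from $M^{(k)}=A_k+\e^{t\,\ad_{A_k}}M^{(k-1)}$, where the paper cites it, and you state explicitly that $Z^{[m]}$ is skew-adjoint, which Lemma~\ref{lemma_estimate} needs; both steps are sound.
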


Let us particularize this result to the case of $N=2$ operators. Then, a straightforward calculation shows that
\[
\begin{aligned}
 &   \int_0^1 \left\| \sum_{n=1}^m \frac{\sigma^{n-1}}{(n-1)!} \, \ad_{A_2}^n C_{1,m-n}  \right\| d \sigma = 
   \int_0^1  \left\| \frac{\sigma^{m-1}}{(m-1)!} \ad_{A_2}^m C_{1,0} \right\| d\sigma \\
 & \qquad   =   \int_0^1  \frac{\sigma^{m-1}}{(m-1)!} \ \| \ad_{A_2}^m A_1\| d\sigma =
  \frac{1}{m!}  \|\ad_{A_2}^m A_1  \|, 
\end{aligned}  
\]
and Theorem \ref{th:general_bound} is recovered with the replacements $A_1 \longmapsto B$ and $A_2 \longmapsto A$.

\section{Symmetric BCH formula}

Very often in practical problems one has to deal with the so-called symmetric BCH formula
\begin{equation}  \label{sBCH1}
  \exp \left(\frac{t}{2} A\right) \, \exp(t B) \, \exp\left(\frac{t}{2} A\right) = \exp \big(W(tA, tB)\big).
\end{equation}
This is the case, in particular, for palindromic splitting methods \cite{blanes24smf}.
Two applications of the usual BCH formula give  the expression of $W(t A, t B)$. More efficient procedures exist, however, 
 to construct explicitly the series $\sum_{n \ge 1} t^n \, W_n(A,B)$ defining $W$ in terms of independent commutators 
involving $A$ and $B$ up to an arbitrarily high degree. In general, $W_{2j} = 0$ for $j \ge 1$, whereas terms up to $W_{19}$
in both Hall and Lyndon bases have been obtained explicitly with the algorithm described in  \cite{casas09aea}. Thus, in particular
\[
\aligned
 &  W_1(A,B) = A + B,  \\
 &  W_3(A,B) =  -\frac{1}{24} [A,[A,B]] - \frac{1}{12} [B,[A,B]],  \\
 & W_5(A,B) = \frac{7}{5760} [A,[A,[A,[A,B]]]] + \frac{7}{1440} [B,[A,[A,[A,B]]]]  \\
 & \quad + \frac{1}{180} [B,[B,[A,[A,B]]]] + \frac{1}{720} [B,[B,[B,[A,B]]]]  \\
 &  \quad  + \frac{1}{480} [[A,B],[A,[A,B]]]  - \frac{1}{360} [[A,B],[B,[A,B]]].  \\
\endaligned   
\]

Analogously to the previous case, we are interested in obtaining bounds for the approximation that results when the series $W(tA,tB)$ in \eqref{sBCH1}
is truncated after the first $p=1,3,\ldots$ terms:
\[
   \left\| \e^{\frac{t}{2} A} \, \e^{t B} \,  \e^{\frac{t}{2} A} - \e^{W^{[p]}(tA,tB)} \right\|, 
\]   
when $A$ and $B$ are skew-adjoint operators.
It turns out that the general treatment carried out in section \ref{general_case} can be applied here with $N=3$ and operators
\[
  A_1 = A_3 = \frac{1}{2} A, \qquad A_2 = B.
\]  
However, to get better estimates,  it is advantageous to define $W_{2j} = 0$ for $j \ge 1$ and consider the truncation
\begin{equation} \label{def_W}
  W^{[p+1]}(tA,tB)= \sum_{j=1}^{p+1} t^j \, W_p(A,B) = t W_1(A,B) + t^3 W_3(A,B) + \cdots + t^p W_p(A,B).
\end{equation}  
The process can be illustrated in the simplest case $p=1$. Then
$W^{[2]} = t(A+B)$, the recursion \eqref{recur_C} gives
\[
  C_{1,0} = \frac{1}{2} A, \quad C_{1,1} = 0, \quad C_{2,0} = \frac{1}{2} A + B, \quad C_{2,1} = \frac{1}{2}\ad_B A, \quad C_{3,0} = A + B, \quad C_{3,1} = 0,
\]
so that $\mathcal{R}^{[2]}(s) = 0$, $\ad_{W^{[2]}} \dot{W}^{[2]} = 0$, and
\[
  \sum_{k=2}^3 \int_0^1 \left\| \sum_{n=1}^2 \frac{\sigma^{n-1}}{(n-1)!} \ad_{A_k}^n C_{k-1,2-n} \right\| d\sigma = 
  \frac{1}{8} \| \ad_A^2 B \| + \frac{1}{4} \|\ad_B^2 A\|. 
\]
Theorem \ref{th:Ng2}, with the obvious identification $Z^{[m]} \equiv W^{[m]}$, then leads to   
\[
  \left\| \e^{\frac{t}{2} A} \, \e^{t B} \,  \e^{\frac{t}{2} A} - \e^{t(A+B)} \right\| \le  \frac{t^3}{24} \|[A,[A,B]]\| + \frac{t^3}{12} \|[B,[A,B]]\|,
\]
thus reproducing the optimal result for the Strang splitting \cite{suzuki85dfo,childs21tot,iserles24aea}. More generally,
application of Theorem \ref{th:Ng2} with $N=3$, $Z^{[m]} = W^{[p+1]}$ and 
$A_1 = A_3 = A/2$, $A_2 = B$ allows one to get a  bound for the truncated symmetric BCH formula with $p$ terms:

\begin{theorem}\label{th:sym} 
Let $A$ and $B$ be two skew-adjoint operators and $W^{[p+1]}(t)$ denote the sum \eqref{def_W} comprising the first $p=1,3,\ldots$ 
terms of the symmetric Baker--Campbell--Hausdorff series, with $t > 0$. Then
\[
\begin{aligned}
  &   \left\| \e^{\frac{t}{2} A} \, \e^{t B} \,  \e^{\frac{t}{2} A} - \e^{W^{[p]}(tA,tB)} \right\|  \le \int_0^t \|  \mathcal{R}^{[p+1]}(s) \| ds + 
    \frac{1}{(p+1)!} \int_0^t \|\ad_{W^{[p+1]}}^{p} \dot{W}^{[p+1]} \| ds  \\
    & \qquad + \frac{t^{p+2}}{p+2} \, \left( \frac{1}{2 (p+1)!} \| \ad_B^{p+1} A\| + \int_0^1 \left\| \sum_{n=1}^{p+1} \frac{\sigma^{n-1}}{(n-1)!} 
    \ad_{A/2}^n  \, C_{2,p+1-n} \right\| d\sigma \right),
\end{aligned}   
\]
where
\[
      \mathcal{R}^{[p+1]}(s) =  \sum_{k=0}^{p} s^k \, C_{3,k} - \sum_{k=0}^{p-1} \frac{1}{(k+1)!} \, \ad_{W^{[p+1]}}^k \dot{W}^{[p+1]}
\]
and $C_{1,n}, C_{2,n}, C_{3,n}$ are obtained with the recurrence \eqref{recur_C} with $A_1=A_3=A$, $A_2=B$.
 \end{theorem}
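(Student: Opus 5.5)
We obtain the statement as a direct specialization of Theorem \ref{th:Ng2} with $N=3$, $A_1=A_3=\frac{1}{2}A$, $A_2=B$, and with the truncated BCH sum $Z^{[m]}$ replaced by $W^{[p+1]}$, so that $m=p+1$. First we check that the proof of Theorem \ref{th:Ng2} never uses that $Z^{[m]}$ contains nonzero terms of every degree. It only uses three facts: Lemma \ref{lemma_estimate}, the expansion of $M_1$ with the remainder $V_{N,m}$, and the Taylor expansion of $M_2$ via Lemma \ref{l:lemma2}. Since $W_{2j}=0$, the sum $W^{[p+1]}$ coincides with the truncation of the symmetric BCH series through degree $p+1$. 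Hence the coefficients of $s^0,\dots,s^{p}$ in $\sum_{k\le p} s^k C_{3,k}$ and in $\sum_{k\le p-1}\frac{1}{(k+1)!}\ad^k_{W^{[p+1]}}\dot W^{[p+1]}$ agree, by the same recursive construction \eqref{cons_terms}. This gives $\mathcal{R}^{[p+1]}(s)=\mathcal{O}(s^{p+1})$, although the argument does not actually need this fact for the inequality itself. The first two terms of the claimed bound are then read off from Theorem \ref{th:Ng2} with $m=p+1$.

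The remaining work is to evaluate the sum $\sum_{k=2}^{3}\int_0^1\|\sum_{n=1}^{p+1}\frac{\sigma^{n-1}}{(n-1)!}\ad_{A_k}^n C_{k-1,p+1-n}\|\,d\sigma$. For $k=2$ we use $C_{1,0}=\frac12 A$ and $C_{1,n}=0$ for $n\ge1$, so only the term $n=p+1$ survives. That term is $\int_0^1\frac{\sigma^{p}}{p!}\,\frac12\|\ad_B^{p+1}A\|\,d\sigma=\frac{1}{2(p+1)!}\|\ad_B^{p+1}A\|$, which is exactly the computation done for $N=2$ after Theorem \ref{th:Ng2}. For $k=3$ we have $A_3=A/2$, and the term is kept in the form written in the statement, with $C_{2,\cdot}$ given by \eqref{recur_C}. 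Multiplying by $t^{p+2}/(p+2)$, which comes from $\int_0^t s^{p+1}ds$, gives the third term.

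The main point requiring care is bookkeeping rather than analysis, and it has three parts. First, the recurrence \eqref{recur_C} must be used with $A_1=A_3=A/2$; the statement's phrase ``with $A_1=A_3=A$'' should be read with the factor $1/2$ included, consistent with $\ad_{A/2}$ in the bound. Second, the exponent of $W$ on the left side is $W^{[p]}=W^{[p+1]}$, because the even term vanishes. Third, one must confirm that the skew-adjointness hypotheses of Lemmas \ref{lemma_estimate} and \ref{l:lemma2} hold. They do: $W^{[p+1]}(s)$ and $\dot W^{[p+1]}$ are real combinations of nested commutators of skew-adjoint operators, hence skew-adjoint, so $\e^{W^{[p+1]}}$ is unitary and the bound \eqref{norm_G} applies. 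As a sanity check, the case $p=1$ reproduces the Strang estimate computed just before the statement.
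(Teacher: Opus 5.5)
Your proposal is correct and follows the paper's own argument: specialize Theorem~\ref{th:Ng2} to $N=3$, $A_1=A_3=\frac12 A$, $A_2=B$, $m=p+1$, $Z^{[m]}=W^{[p+1]}$, and use $C_{1,n}=0$ for $n\ge 1$ to reduce the $k=2$ remainder term to $\frac{1}{2(p+1)!}\|\ad_B^{p+1}A\|$. You are also right to read the recurrence with $A_1=A_3=A/2$ rather than the literal $A_1=A_3=A$ in the statement, and to identify $W^{[p]}$ with $W^{[p+1]}$.
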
   

\begin{corollary}
With the same hypothesis as Theorem \ref{th:sym} one has, for $p = 3,5,7, \ldots$,
\[
   \left\| \e^{\frac{t}{2} A} \, \e^{t B} \,  \e^{\frac{t}{2} A} - \e^{W^{[p]}(tA,tB)} \right\|  \le  \sum_{j=p+2}^{p^2 +p -2} t^j \, S_j^{[p]}(A,B),
\]
where $S_j^{[p]}(A,B)$ are linear combinations of norms of nested commutators involving $j$ operators $A$ and $B$.
\end{corollary}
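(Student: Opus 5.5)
The plan follows the proof of the corollary to Theorem~\ref{th:general_bound}: start from Theorem~\ref{th:sym}, expand each integrand as a polynomial in $s$, track both the $s$-degree and the number of operators $A$, $B$ in every coefficient, integrate in $s$, and finish with the triangle inequality. The key device is homogeneity. Give $A$ and $B$ weight one, so that $W_j(A,B)$ is a homogeneous Lie polynomial of degree $j$. Then every coefficient of $s^{j-1}$ in the integrands below is a Lie polynomial of degree $j$ in $A$, $B$, and $\int_0^t s^{j-1}\,ds = t^j/j$. Once we know which powers occur, the coefficient of $t^j$ is therefore a combination of nested commutators involving exactly $j$ operators, as the statement asserts. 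So only the range of exponents needs to be checked.

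\emph{The term with $\ad^{p}_{W^{[p+1]}}\dot W^{[p+1]}$.} Write $W^{[p+1]}(s)=\sum_{i\ \mathrm{odd}\le p} s^i W_i$ and $\dot W^{[p+1]}=\sum i\, s^{i-1}W_i$. Then
\[
\ad^{p}_{W^{[p+1]}}\dot W^{[p+1]}=\sum i_{p+1}\, s^{i_1+\cdots+i_{p+1}-1}\,\ad_{W_{i_1}}\cdots\ad_{W_{i_p}}W_{i_{p+1}},
\]
where every $i_k$ is odd and at most $p$, and the innermost pair satisfies $i_p\neq i_{p+1}$ (otherwise the commutator vanishes).
\begin{itemize}
\item The largest total is $(p-1)p+p+(p-2)=p^2+p-2$. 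This gives $s^{p^2+p-3}$, hence $t^{p^2+p-2}$ after integration.
\item The smallest total occurs when all indices equal $1$ except one equal to $3$. It is $p+3$, giving at least $t^{p+3}$.
\end{itemize}
Both ends lie in $[p+2,\,p^2+p-2]$.

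\emph{The last term in Theorem~\ref{th:sym}.} The recurrence \eqref{recur_C} shows that $C_{2,q}$ is a Lie polynomial of degree $q+1$. Hence $\ad_{A/2}^n C_{2,p+1-n}$ and $\ad_B^{p+1}A$ have degree $p+2$. After evaluating the $\sigma$-integral (or simply bounding it by the triangle inequality), this contributes $t^{p+2}$ times norms of commutators with $p+2$ operators.

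\emph{The residual $\mathcal{R}^{[p+1]}$ (main obstacle).} This is the delicate step. First, by construction of the $W_j$, all powers $s^0,\dots,s^{p-1}$ of $\mathcal{R}^{[p+1]}$ cancel. Second, the $s^{p}$ coefficient also cancels. The reason is that $W_{p+1}=0$ is the correct BCH term, so the truncation $W^{[p+1]}$ agrees with $W$ through order $p+1$, and the derivative-matching argument of Section~2.2 then gives $\mathcal{R}^{[p+1]}=\mathcal{O}(s^{p+1})$. I would verify this by comparing the $s^{p}$ coefficients of $M_1$ and $M_2$ exactly as in \eqref{cons_terms}. In $M_1$ these are the $C_{3,k}$ with $k\le p$. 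The upper degree then comes from the truncated sum $\sum_{k\le p-1}\frac{1}{(k+1)!}\ad^k_{W^{[p+1]}}\dot W^{[p+1]}$. The same index count as above, with $p-1$ in place of $p$, gives maximal $s$-degree $p^2-3$. Integrating yields at most $t^{p^2-2}\le t^{p^2+p-2}$. The lowest power is $s^{p+1}$, i.e.\ $t^{p+2}$.

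\emph{Conclusion.} Collecting the three contributions gives a polynomial in $t$ with exponents from $p+2$ to $p^2+p-2$. Applying the triangle inequality inside each homogeneous coefficient, after rewriting in a basis of $\mathcal{L}(A,B)$, produces the $S_j^{[p]}(A,B)$.
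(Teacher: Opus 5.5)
Your proposal is correct and follows essentially the argument the paper leaves implicit, namely the symmetric analogue of the proof of the corollary to Theorem~\ref{th:general_bound}. You bound the $s$-degrees in $\ad^k_{W^{[p+1]}}\dot W^{[p+1]}$ using odd indices $\le p$ with distinct innermost pair, use $\mathcal{R}^{[p+1]}=\mathcal{O}(s^{p+1})$ (because $W_{p+1}=0$ is the exact term), and note that the remainder term of Theorem~\ref{th:sym} is homogeneous of degree $p+2$. The triangle inequality $\|\sum_j r_j s^j\|\le\sum_j s^j\|r_j\|$, and likewise for the $\sigma$-integrand, must be applied before integrating, not after, so that the coefficients come out as genuine linear combinations of norms.
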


The explicit expression for $p=3$ contains terms up to degree 10 in $t$ and can be found at the website 
\begin{center}
  \url{http://www.gicas.uji.es/Research/bch.html}
\end{center}  
and contains 45 terms. In general, the bounds thus obtained contain even and odd
powers of $t$. The expression of the first term in the error bound, $S_{p+2}^{[p]}$ for $p=3,5,7$ is also available there in the Hall basis.

\section{Error bounds for the Zassenhaus formula} 
 
 The successive operators $C_n(A,B)$ in the infinite product \eqref{zass_1} can be determined by the recursive procedure presented in \cite{casas12eco},
 which, for completeness, is summarized as Algorithm \ref{alg_zass}.
 
\begin{algorithm}
\caption{Computation of the coefficients $C_n$ in the Zassenhaus formula}\label{alg_zass}
\begin{algorithmic}[1]

\State Compute
\[
f_{1,k} =
\sum_{j=1}^{k}
\frac{(-1)^k}{j!(k-j)!}
\,\ad_B^{\,k-j}\ad_A^{\,j} B \qquad\qquad \mbox{ for } \quad k \ge 1
\]

\State Compute
\[
C_n = \frac{1}{n} f_{1,n-1} \qquad\qquad \mbox{ for } \quad n=2,3,4
\]

\State Compute recursively
\[
f_{n,k}=
\sum_{j=0}^{\lfloor k/n \rfloor -1}
\frac{(-1)^j}{j!}
\ad_{C_n}^{\,j}
f_{n-1,k-nj} \qquad\qquad \mbox{ for } \quad n \ge 5, \; k \ge n
\]

\State Determine
\[
C_n =
\frac{1}{n}
f_{\lfloor (n-1)/2 \rfloor ,\,n-1} \qquad\qquad \mbox{ for } \quad n \ge 5
\]

\end{algorithmic}
\end{algorithm}

Since the the infinite product \eqref{zass_1} has to be truncated in practice,  it makes sense to analyze the error associated with
this truncation. More specifically, and following the same approach as for the truncated BCH formula, we next address the following

\

\noindent
\textbf{Problem 2:} \textit{Given the skew-adjoint operators $A$ and $B$, and the Zassenhaus formula after $q \ge 2$ terms, obtain rigorous bounds for 
\begin{equation} \label{bound_z1}
   \mathcal{E}_z \equiv \left\| \e^{t (A+B)} - \e^{t A} \, \e^{t B} \, \e^{t^2 C_2(A,B)} \, \cdots \, \e^{t^q C_q(A,B)} \right\|, \qquad t \in \mathbb{R},
\end{equation}
 in terms of commutators of $A$ and $B$.}
 
 \

Two remarks are in point here: first, when all the $C_k$ in \eqref{bound_z1} are zero we recover again the Lie--Trotter approximation and the bound
\eqref{lie_trotter}. Second, since the spectral norm is unitarily invariant,  it is true that 
\begin{equation} \label{bound_z2}
   \mathcal{E}_z  = \left\| \e^{-t B} \, \e^{-t A} \, \e^{t (A+B)} -  \e^{t^2 C_2(A,B)} \, \cdots \, \e^{t^q C_q(A,B)} \right\|
\end{equation} 
and this is the object we will analyze in the sequel. 

Proceeding in an analogous way as in the truncated BCH formula, we introduce the operators
\[
   F_1(t) = \e^{-t B} \, \e^{-t A} \, \e^{t (A+B)}, \qquad\quad \mbox{ and } \quad\qquad F_2(t) =  \e^{t^2 C_2(A,B)} \, \cdots \, \e^{t^q C_q(A,B)} 
\]
verifying \eqref{ivp_2op} with
\begin{equation} \label{bound_z3}
 \begin{aligned}
  & M_1(t) = -B + \e^{-t \, \ad_B} \, \e^{-t \, \ad_A} B, \\
  & M_2(t) =  2 \, t \, C_2 + \sum_{j=3}^q j \, t^{j-1} \, \e^{t^2 \, \ad_{C_2}} \, \cdots \, \e^{t^{j-1} \, \ad_{C_{j-1}}} C_j,
 \end{aligned}
\end{equation}  
respectively. The desired bounds are then obtained by appropriately expanding $M_1(t)$, $M_2(t)$ in powers of $t$ and applying Lemma \ref{lemma_estimate}.
Thus, for $q=2$ one gets
\[
 M_1(t) = - t \, \ad_A B - t^2 \, G_1(t, -B, \ad_A B) + t^2 \, \e^{-t \ad_B} G_2(t, -A, B),
\]
where the remainders $G_n$ are defined by \eqref{eq:Gn}). In this way,
\[
 M_1(t) - M_2(t)  = - t^2 \, G_1(t, -B, \ad_A B) + t^2 \, \e^{-t \ad_B} G_2(t, -A, B),
\]
so that
\[
\begin{aligned}
  & \| M_1(t) - M_2(t) \|  \le t^2 \|G_1(t, -B, \ad_A B)\| + t^2 \| G_2(t, -A, B) \| \\
  		&	\qquad	 \le t^2 \| \ad_{-B} \ad_A B\| + \frac{t^2}{2} \| \ad_A^2 B\| 
 				  \le t^2 \|[B,[A,B]]\| + \frac{t^2}{2} \|[A,[A,B]]\|
\end{aligned}
\]
and finally
\[
 \left\| \e^{-t B} \, \e^{-t A} \, \e^{t (A+B)} -  \e^{t^2 C_2(A,B)} \right\| \le \frac{t^3}{3} \|[B,[A,B]]\| +  \frac{t^3}{6} \|[A,[A,B]]\|,
\]
thus leading to \eqref{zass_b2} when $t=1$. In view of the expression for $C_3$, eq. \eqref{first_terms_z}, this constitutes an optimal bound: 
it is not possible to get smaller coefficients in front of the respective commutators.

The general expression for any $q \ge 3$ can be obtained by expanding $M_1$ and $M_2$ up to terms in $t^q$ as follows. On the one hand,
\[
  M_1(t) = M_{1,0}(t) + M_{1,1}(t),
\]
with
\begin{equation} \label{expre_M1}
\begin{aligned}
 M_{1,0}(t) & = \sum_{n=1}^{q-1} t^n \sum_{j=0}^n \frac{1}{j! (n-j)!} \ad_{_B}^j \ad_{-A}^{n-j} B  \\
 M_{1,1}(t) & =  t^q \sum_{n=1}^{q-1} \frac{1}{n!} G_{q-n}(t, -B, \ad_{-A}^n B) + t^q \, \e^{t \ad_{-B}} \, G_q(t, -A, B).
\end{aligned}
\end{equation}   
On the other hand, if we introduce the operators $Z_j(t)$, $j=2,\ldots, q$, defined recursively through
\begin{equation} \label{zetas_z}
  \begin{aligned}
  & Z_q(t) = q t^{q-1} C_q \\
  & Z_j(t) = j t^{j-1} C_j + \e^{t^j \ad_{C_j}} Z_{j+1}(t), \qquad\quad j= q-1, q-2, \ldots, 2,
 \end{aligned} 
\end{equation}
then it is clear that $M_2(t) = Z_2(t)$. In fact, if we write 
\[
  Z_j(t) = Z_{j,0}(t) + Z_{j,1}(t), \qquad j=2,\ldots, q
\]
with  $Z_{q,0}(t) =  q \, t^{q-1} C_q$, then 
\begin{equation} \label{expre_Z1}
  \begin{aligned}
   & Z_{j,0}(t) = j \, t^{j-1} C_j + \sum_{k=0}^{\ell_j -1} \frac{t^{j k}}{k!} \ad_{C_j}^k Z_{j+1,0}(t), \qquad  j=q-1, \ldots, 2 \\
   & Z_{j,1}(t) =  t^{j \ell_j} \, G_{\ell_j}(t^j, C_j, Z_{j+1,0}(t)) + \e^{t^j \ad_{C_j}} Z_{j+1,1}(t), \qquad \ell_j := \lfloor \frac{q}{j} \rfloor.
  \end{aligned} 
\end{equation}
Then, by the way the Zassenhaus formula is obtined, we have $M_{1,0}(t) = Z_{2,0}(t)$, so that
\[
  M_1(t) - M_2(t) = M_{1,1}(t) - Z_{2,1}(t),
\]
and therefore $\|M_1(t) - M_2(t)\| \le \|M_{1,1}(t)\| + \|Z_{2,1}(t)\|$.  Now, from equations \eqref{expre_M1} and \eqref{norm_G}, it follows that
\[
  \|M_{1,1}(t)\| \le t^q \sum_{n=1}^{q-1} \frac{1}{n! (q-n)!} \|\ad_B^{q-n} \ad_{A}^n B\| +  \frac{t^q}{q!} \, \|\ad_A^q B\|,
\]  
whereas \eqref{expre_Z1} leads to
\[
\begin{aligned}
 \|Z_{2,1}(t)\| & \le \| t^{2 \ell_2} G_{\ell_2}(t^2, C_2, Z_{3,0} + \e^{t^2 \ad_{C_2}} Z_{3,1}\| \le t^{2 \ell_2} \|G_{\ell_2}(t^2, C_2, Z_{3,0}\| + \|Z_{3,1}\| \\
 & \le \sum_{j=2}^{q-1} t^{j \ell_j} \|G_{\ell_j}(t^2, C_j, Z_{j+1,0}\| \le \sum_{j=2}^{q-1} \frac{t^{j \ell_j}}{\ell_j!} \|\ad_{C_j}^{\ell_j} Z_{j+1,0}\|.
\end{aligned}
\]
In consequence, 
\[ 
\aligned 
&  \left\| \e^{t (A+B)}  - \e^{t A} \, \e^{t B} \, \e^{t^2 C_2(A,B)} \, \cdots \, \e^{t^q C_q(A,B)} \right\|
\le   
\frac{t^{q+1}}{q+1}\sum_{n=1}^{q-1}  \frac{1}{n!(q-n)!}\,\|
\operatorname{ad}^{q-n}_{B} \operatorname{ad}_{A}^{n}B \|
\\& \qquad+
\frac{t^{q+1}}{(q+1)!}\, \|\ad^q_{A} B\|  + \sum_{j=2}^{q-1}\int_0^t   \frac{s^{j \ell_j}}{\ell_j!} \|\ad_{C_j}^{\ell_j} Z_{j+1,0}(s)\| \, ds.
\endaligned
\] 
If we particularize this recursion to the case $q=3$, we get   
\[
  \|M_1(t) - M_2(t)\|  \le \frac{t^3}{2} \|\ad_B^2 \ad_A B\|  + \frac{t^3}{2} \|\ad_B \ad_A^2 B\|  + \frac{t^3}{6} \|\ad_A^3 B\| 
  + 3 t^4 \|\ad_{C_2} C_3\|
\] 
and therefore
\[
\begin{aligned}
 &  \left\| \e^{t (A+B)} -  \e^{t A} \, \e^{t B} \, \e^{t^2 C_2(A,B)} \,  \e^{t^3 C_3(A,B)} \right\| \le \frac{t^4}{8} \|[B,[B,[A,B]]]\| + \frac{t^4}{8} \|[B,[A,[A,B]]]\| \\
 & \qquad\qquad + \frac{t^4}{24} \|[A,[A,[A,B]]]\| + \frac{3}{5} t^5 \|[C_2,C_3]\|.
\end{aligned} 
\]
In general we have
\begin{equation} \label{gen_expl}
  \left\| \e^{t (A+B)} - \e^{t A} \, \e^{t B} \, \e^{t^2 C_2(A,B)} \, \cdots \, \e^{t^q C_q(A,B)} \right\|
   \le \sum_{j=q+1}^{2q-1} t^j D_j^{[q]}(A,B), \qquad q \ge 2,
\end{equation}
where each $D_j^{[q]}(A,B)$ is a linear combination of norms of nested commutators involving $j$ operators $A$ and $B$.

The interested reader can find the code implementing this recursion, together with an explicit formula for the general bound \eqref{gen_expl}
and several examples of application at the website
\begin{center}
  \url{http://www.gicas.uji.es/Research/bch.html}
\end{center}

\section{Concluding remarks}
\label{conclusions}

The Baker--Campbell--Hausdorff formula is a fundamental mathematical object with a wide range of applications in quantum mechanics, numerical 
analysis of differential equations, control theory, differential geometry and mathematical physics. 
It is therefore not surprising that it has attracted so much attention over the years, with recent contributions emphasizing the obtention of closed form
expressions in particular settings \cite{moodie21aep}.

In many practical applications, however, the BCH formula must be truncated, so it is meaningful to establish rigorous bounds on the error introduced by such truncation. This is particularly true in the case of the simulation of Hamiltonian systems by means of quantum algorithms (for example, through Trotter--Suzuki-type compositions). In that case, these bounds make it possible to obtain information about the number of circuit gates or about how the error scales with time.


The Zassenhaus formula is also highly useful in quantum Hamiltonian simulation, as it naturally captures the algebraic interplay between non-commuting operators through a hierarchy of correction terms, each represented by the exponential of increasingly nested commutators. In many cases, particularly for weakly coupled systems, the magnitude of these higher-order corrections decreases rapidly, so that a suitable truncation can yield sufficiently accurate results for local Hamiltonians \cite{nguyen25zei}. In this context, sharp bounds for the error incurred by truncating the expansion can therefore be of considerable practical relevance.

In this work we have carried out a systematic treatment of this issue in a finite-dimensional setting, 
establishing general bounds then the BCH and the Zassenhaus formulas are
truncated in the unitary case, when the operators appearing in the exponents are skew-adjoint. We have also analyzed for completeness the symmetric
BCH formula. These bounds are polynomials in the parameter
involved whose coefficients are iterated commutators of the skew-adjoint operators, thus exhibiting the commutator scaling property. The provided results can be implemented in a computer algebra systems to render explicit expressions up to high values of the truncation index $m$. Although, for illustration purposes, we present these expressions using the classical Hall basis, other bases of the corresponding free Lie algebra may equally be employed. It should be noted, however, that the resulting bounds depend on the particular choice of basis taken.
The treatment can also be extended to other variants of the BCH \cite{vanbrunt16str} and Zassenhaus formulas \cite{arnal17ots}.

\subsection*{Acknowledgements}
This work has been supported by 
Ministerio de Ciencia e Innovaci\'on (Spain) through project  PID2022-136585NB-C21, 
MCIN/AEI/10.13039/501100011033/FEDER, UE. The authors wish to express their gratitude to A. Murua 
for making his \emph{Mathematica} package \texttt{LieSeries} available to them.

\subsection*{Compliance with Ethical Standards}

All authors declare that they have no conflicts of interest.

 \appendix
 
 \section{Appendix: explicit expressions of error bounds for the BCH formula }

We next collect the bound obtained from Theorem  \ref{th:general_bound} for the BCH formula when the first three terms in the series is considered,
$m=3$, in the classical Hall basis. Specifically,
\[
 Z^{[3]}(tA, tB) = t(A+B) + \frac{t^2}{2} [A, B] + \frac{t^3}{12} ([A,[A,B]] - [B,[A,B]]).
\]
Then we get
\begin{equation} \label{B1_3}
    \left\| \e^{t A} \, \e^{t B} - \e^{Z^{[3]}(tA, tB) } \right\|  \le B_1^{[3]} \equiv \sum_{j=4}^8 t^j C_j^{[3]},
\end{equation}
with
\allowdisplaybreaks
\begin{align*}
 & C_4^{[3]} = \frac{1}{48} \big\| -[A,[A,[A,B]]] + [B,[B,[A,B]]] \big\| + \frac{1}{24} \|[A,[A,[A,B]]]\| \\
  & \qquad + \frac{1}{48} \big\| [A,[A,[A,B]]] + 2 [B,[A,[A,B]]] + [B,[B,[A,B]]] \big\| \\
 & C_5^{[3]} =  \frac{1}{240} \big\| -[[A,B],[A,[A,B]]] + [[A,B],[B,[A,B]]] \big\| +  \big\| \frac{1}{180} \Big([A,[A,[A,[A,B]]]] \\
   & \qquad + [B,[A,[A,[A,B]]]] - [B,[B,[A,[A,B]]]] - [B,[B,[B,[A,B]]]] \Big) \\
   & \qquad + \frac{1}{120}  [[A,B],[A,[A,B]]] + \frac{1}{360} [[A,B],[B,[A,B]]] \big\| \\
 & C_6^{[3]} =  \frac{1}{288}  \big\|[[A,B],[A,[A,[A,B]]]]\| - \|[[A,B],[B,[B,[A,B]]]] \big\| \\
 & C_7^{[3]} =  \frac{1}{2016}  \big\|[[A,B],[[A,B],[A,[A,B]]]]\| - \|[[A,B],[[A,B],[B,[A,B]]]] \big\| \\
    & \qquad + \frac{1}{3024} \big\| [[A,[A,B]], [A,[A,[A,B]]]] - \|[[A,[A,B]], [B,[B,[A,B]]]]  \\
    & \qquad  - [[B,[A,B]], [A,[A,[A,B]]]] + [[B,[A,B]], [B,[B,[A,B]]]] \big\|\\
 & C_8^{[3]} = \frac{1}{13824} \big\|[[A,[A,B]], [[A,B],[A,[A,B]]]]  \\
    & \qquad\quad -  [[A,[A,B]], [[A,B],[B,[A,B]]]] \\
    & \qquad\quad - [[B,[A,B]], [[A,B],[A,[A,B]]]] \\
    & \qquad\quad + [[B,[A,B]], [[A,B],[B,[A,B]]]] \big\|
\end{align*}
 
 
\bibliographystyle{siam}


\end{document}